\newtheorem{theorem}{Theorem}
\newtheorem{lemma}[theorem]{Lemma}
\newcommand*\patchAmsMathEnvironmentForLineno[1]{%
  \expandafter\let\csname old#1\expandafter\endcsname\csname #1\endcsname
  \expandafter\let\csname oldend#1\expandafter\endcsname\csname end#1\endcsname
  \renewenvironment{#1}%
     {\linenomath\csname old#1\endcsname}%
     {\csname oldend#1\endcsname\endlinenomath}}%
\newcommand*\patchBothAmsMathEnvironmentsForLineno[1]{%
  \patchAmsMathEnvironmentForLineno{#1}%
  \patchAmsMathEnvironmentForLineno{#1*}}%
\renewcommand{\figurename}{Fig.}
\date{}
\journal{Theoretical Population Biology}
\begin{document}
\defaultbibliographystyle{model2-names}
\defaultbibliography{ecogenetic_main}

\begin{frontmatter}
\title{Ecological and genetic effects of introduced species on their native competitors}

\author{Meike J. Wittmann\corref{lab1}}
\cortext[lab1]{Corresponding author.}
\ead{wittmann@bio.lmu.de}

\author{Martin Hutzenthaler}
\author{Wilfried Gabriel}
\author{Dirk Metzler}

\address{Department Biologie II, Ludwig-Maximilians-Universit\"{a}t M\"{u}nchen, Großhaderner Str. 2, 82152 Planegg-Martinsried, Germany}

\begin{abstract}
Species introductions to new habitats can cause a decline in the population size of competing native species and consequently also in their genetic diversity. We are interested in why these adverse effects are weak in some cases whereas in others the native species declines to the point of extinction. While the introduction rate and the growth rate of the introduced species in the new environment clearly have a positive relationship with invasion success and impact, the influence of competition is poorly understood. Here, we investigate how the intensity of interspecific competition influences the persistence time of a native species in the face of repeated and ongoing introductions of the nonnative species.
We analyze two stochastic models: a model for the population dynamics of both species and a model that additionally includes the population genetics of the native species at a locus involved in its adaptation to a changing environment.
Counterintuitively, both models predict that the persistence time of the native species is lowest for an intermediate intensity of competition. This phenomenon results from the opposing effects of competition at different stages of the invasion process: With increasing competition intensity more introduction events are needed until a new species can establish, but increasing competition also speeds up the exclusion of the native species by an established nonnative competitor. By comparing the ecological and the eco-genetic model, we detect and quantify a synergistic feedback between ecological and genetic effects.
\end{abstract}

\begin{keyword}
\noindent Keywords: invasive species, competition, eco-evolutionary dynamics, birth and death process, diffusion process
\end{keyword}

\end{frontmatter}

\pagebreak


\begin{bibunit}
\section{Introduction}
\noindent When a new species is introduced to a location where it did not occur before, it begins to interact with the resident species, for example as a predator, mutualist, or competitor. These interactions are critical in determining the fate of the introduced species and whether and how the community changes in response to the introduction. In this study, we investigate the effects of introduced species on the native species with which they are competing for resources, such as food or territories. Often there is considerable variation between geographical locations in the impacts of an introduced species on a particular native competitor, e.g.\ in container-dwelling mosquitoes in Florida \citep{3269Juliano1998} or for fish introductions to California \citep{3634Herbold1986}. Thus the question is: Why does the native species suffer from severe impacts or even goes extinct in some places, but not in others? 

While it is evident that the probability for an introduced species to become a high-impact invader at a certain location increases with introduction rate (also known as propagule pressure, one of the most important factors in invasion biology, \citealp{1054Duncan1997,1052Drake2005,350Lockwood2005}) and with the species' growth rate in the new environment \citep{3340Rejmanek1996,3345Wu2005}, the role of competition has remained controversial \citep{3634Herbold1986,1054Duncan1997,3305Davis2003}. Some empirical studies suggest that differences in the intensity of interspecific competition can explain differences in impacts, e.g.\ in the competitive interaction between native bumble bees and introduced honey bees in California, where the overlap in the flowers visited by honey bees and bumble bees was used as a proxy for the intensity of interspecific competition \citep{3280Thomson2006}. However, it has been difficult to disentangle the effect of competition from that of other variables and no consensus has emerged yet. Here we contribute a first theoretical building block towards an understanding of how competition intensity influences introduced species impacts.

Considerable impacts of an introduced species on a native species or even extinction of the native species can only occur if the introduced species completes two major stages of the invasion process: The establishment stage starts with the introduction of one or more founding individuals. This new population can then either go extinct, in which case it has to await the next introduction event, or increase in size until it is of the same order of magnitude as the competing native population. From there, the introduced species can proceed to a second stage in which it becomes dominant and may eventually exclude the native species from the community. Thus far, knowledge on the effect of competition is limited to single stages of the invasion process. 

At the establishment stage, the intensity of competition with native species appears to have a negative effect on the success of introduced species. With phylogenetic relatedness as a proxy for the intensity of competition, this intuitive idea goes back to \citet[][chapter 4]{98Darwin1859}. Darwin's naturalization hypothesis, as it is formulated nowadays, states that introduced species should be less successful at locations where closely related species are already present \citep{3203Duncan2002}. Surprisingly, the opposite pattern is sometimes observed in historical data, for example for plants introduced to Hawaii and New Zealand \citep{3201Daehler2001,3203Duncan2002}. A possible explanation is that closely related species not only compete with each other but also share characteristics that may confer a high intrinsic ability to survive and grow at the new location \citep{3203Duncan2002}. 

In the only experimental study on Darwin's naturalization hypothesis that we know of, however, the establishment success of an invader in a microcosm bacterial community increased with the average phylogenetic distance to the recipient community \citep{3311Jiang2010}. In this case, it has also been confirmed that more closely related species had a higher overlap in resource use and were therefore competing more intensely. Additional support for the negative effect of competition at the establishment stage comes from historical bird introductions to Hawaii: \citet{3248MOULTON1983} and \citet{3250MOULTON1993} observed a negative correlation between establishment success and the number of bird species already present, an observation that fits well with the predictions of a model for the assembly of a competitive Lotka-Volterra community \citep{3310Gamarra2005}.

At the second stage of the invasion process, when the nonnative species has already established, the impacts of the introduced species on the native competitors seem to increase with increasing intensity of competition. For example, among pairs of congeneric bird species introduced to Hawaii, pairs in which both species persisted had a significantly higher relative difference in beak length than pairs of species in which one or both species went extinct \citep{3198MOULTON1985}. Contrarily, \citet{3221Ricciardi2004} and \citet{3207Strauss2006} found that species less related to the native community had higher impacts. However, their measures of impact summarized effects on various aspects of the native community, not only those on competitors, such that their results do not fully apply to our problem. 

At this second stage, a number of evolutionary and genetic effects can contribute to the impacts of the introduced species on the native species. If the two species are closely related, there is the potential for hybridization and introgression, which can lead to new opportunities for evolution but also to extinction of rare native populations \citep{3270Rhymer1996}. Introduced and native competitors also impose new selection regimes on each other which can lead to shifts in life histories and resource use that reduce the intensity of competition and thus facilitate coexistence \citep{3268CROWDER1984}, or even a coevolutionary arms race in the exploitation of a limiting resource \citep{611Leger2010}. These evolutionary effects are strongly contingent on the species involved. 

There are also genetic effects, however, that are a direct consequence of the reduction in native population size due to competition and should therefore be present in most cases: smaller populations are subject to inbreeding depression, they can maintain a lower amount of genetic diversity, and accumulate deleterious mutations more rapidly \citep{1037Frankham1995,3052LANDE1995,3312Frankham2004}. Additionally, fluctuations in population size or habitat fragmentation can change the genetic configuration of the native species. So far, there are only a few empirical studies that examine such genetic effects of introduced species \citep[e.g.][]{1032KRUEGER1991,1071Kim2003}. In this study, we focus on the reduction in a native population's genetic diversity caused by a population decline after the invasion of a competitor. A reduction in genetic diversity can lower the native species' ability to respond to changes in the environment \citep{1040Strauss2006} and thus lead to a reduced growth rate \citep{1062Lande1996}. This, in turn, leads to a further decline in population size, thus closing the feedback loop. Such a synergistic feedback between ecological and genetic effects can accelerate population extinction \citep{3317Robert2011}. We will call this the eco-genetic effect of the introduced species and quantify how its strength depends on the intensity of competition with the native species.

In summary, a high intensity of competition between introduced and native species appears to have contrary effects at the different stages of the invasion process: it makes establishment of the introduced species more difficult but also increases the extinction risk imposed by an already established nonnative species on the native species \citep[see also][]{3355MacDougall2009}. This raises the question: What is the overall effect of competition intensity integrated across the entire invasion pathway?
 In this study, we consider a scenario in which an unlimited series of introduction events would sooner or later lead to the extinction of the native species. Our goal is to quantify for how long the native species can persist depending on the intensity of competition with the introduced species.

Stochastic models based on birth, death, and migration events at the individual level have increased our understanding of a wide range of processes in community ecology \citep{3654Black2012}, for example diversity patterns in dispersal-limited communities \citep{3187Alonso2006}. Stochasticity in the fates of individuals is particularly important for the dynamics of small introduced populations. Thus we use a stochastic modeling approach and compute the expectation and the variance of the time to the extinction of the native species, its persistence time. Very long persistence times in our model can be interpreted as signs for indefinite coexistence, since in these cases we would expect events like evolutionary divergence of niches to occur before the extinction of the native species \citep[see][and references therein]{1040Strauss2006}. 

We also study how the relationship between competition and persistence time is modulated by the rate at which nonnative individuals are introduced and the nonnative species' intrinsic ability to grow and reproduce in the new environment, i.e. we address questions such as: Is the effect of competition different for species that are introduced at a high vs. a low rate or does it depend on whether the introduced species or the native species has a higher fecundity. First, we consider these questions for a purely ecological model, which can be analyzed using theory on birth and death processes and a corresponding diffusion approximation. Then we transform this model into an eco-genetic model by adding a genetic dimension that allows us to quantify the feedback of reduced genetic diversity and adaptability on extinction risk in a variable environment and its dependence on competition strength. Lastly, we address a question of interest for invasive species management: How low does the introduction rate of a certain species need to be such that the native species is expected to persist for a certain threshold time? 

\section{Modeling}
\subsection{The ecological model}
\label{sec:ecomod}
We represent the population dynamics of the native and the introduced species as a continuous-time stochastic model similar to the Moran model in population genetics \citep{3178Moran1958}.
Consider a community consisting of a fixed number $K$ of individuals, each of which belongs either to the native or to the introduced species. The rate at which individuals die is proportional to the extent of competition experienced from conspecifics and members of the other species \citep[as in][]{3151Neuhauser1999}. The strength of interspecific competition relative to intraspecific competition is described by the non-negative competition coefficient $\alpha$. Small values of $\alpha$ represent weak interspecific competition and high values intense interspecific competition. In principle, this parameter can be estimated from data on the overlap in resource use between the two species \citep{3284MAY1975}. The individual whose offspring replaces the dead individual is selected by randomly drawing one individual from the whole community, including the individual that just died. In this draw, native individuals have weight 1 and members of the introduced species weight $w$. Thus $w$ can be understood as the fecundity of the introduced species relative to that of the native species in the sense that it is proportional to the per capita number of offspring in a large offspring pool from which the new individual is drawn. 

Note that for the sake of simplicity we assume that the competition coefficient $\alpha$ is the same for both species. Thus we are using this parameter to describe the symmetric component of the competitive interaction. Nevertheless, an asymmetric situation can be generated by setting the fecundity parameter $w$ to a value different from 1. For $w>1$ the introduced species has a fecundity advantage, for $w<1$ a disadvantage. Initially, all $K$ individuals belong to the native species. From time 0 onwards, single individuals of a nonnative species are introduced at rate $\gamma$ (on average $\gamma$ times per time unit) and start competing with the native species. To bring the community back from $K+1$ to $K$ individuals after an introduction event, one individual is drawn to die with weights proportional to the competition experienced.

The population dynamics of the native species can be formulated as a Markov process $N=\left(N(t)\right)_{t\geq 0}$ with state space $\{0,1,2,\dots,K\}$ which describes the number of native individuals currently in the community. Since in this model transitions are only possible between neighboring states, it belongs to the class of birth and death processes \citep[][p.131-150]{3100Karlin1975}. The rate $\lambda_n$ at which the number of native individuals increases by one is
\begin{linenomath}\begin{equation}
\lambda_n =\underbrace{\frac{c(K-n,n)\cdot (K-n)}{K}}_{\substack{\text{rate at which}\\ \text{members of the introduced}\\ \text{species die}}} \cdot \underbrace{\frac{n}{(K-n)\cdot w+ n}}_{\substack{\text{probability that a}\\ \text{native individual} \\ \text{gives birth}}}
\label{eq:lambda_def}
\end{equation}\end{linenomath}
for $n \in \{0,\dots,K-1\}$, where $c(x,y)=x+\alpha y$ is the competition experienced by an individual when the population size of the species it belongs to is $x$ and the size of the competing species is $y$. The constant of proportionality for the death rate is chosen to be $1/K$, such that in the absence of the introduced species native individuals die at rate 1. Then there are on average $K$ death events per time unit and one time unit can be considered as one generation. 
For $n \in \{1,\dots,K\}$, the rate $\mu_n$ at which the number of native individuals decreases by one if there are currently $n$ native individuals is
\begin{linenomath}\begin{align}
\mu_n = & \underbrace{\frac{c(n,K-n)\cdot n}{K}}_{\substack{\text{rate at which}\\ \text{native individuals die}}} \cdot \underbrace{\frac{(K-n)\cdot w}{(K-n) \cdot w+ n}}_{\substack{\text{probability that an}\\ \text{introduced individual} \\ \text{gives birth}}} \nonumber \\
 & + \underbrace{\gamma}_{\substack{\text{introduction} \\ \text{rate}}} \cdot \underbrace{\frac{c(n,K-n+1)\cdot n}{c(n,K-n+1)\cdot n + c(K-n+1,n)\cdot(K-n+1)}}_{\substack{\text{probability that a native}\\ \text{individual dies}}}\:.
\label{eq:mu_def}
\end{align}\end{linenomath}

The assumption of a fixed community size is a good approximation for pairs of ecologically similar species for which interspecific competition is as strong as intraspecific competition for the resource which is limiting population size. This can happen, for example, if there is a fixed number of territories or nesting places that can be occupied by one individual from either species. The competition for other important resources can be less intense ($\alpha<1$) or there can be interspecific interference ($\alpha>1$). The robustness of our model results against violations of the constant community size assumption is explored in the supplementary material. 

In a community of finite size, coexistence of native and introduced species is not possible in the long run. We assume that our model encompasses the whole range of the native species, such that a reintroduction of native individuals from outside is not possible. The introduced species, in contrast, can fail to establish and go extinct after an introduction event, but will then be reintroduced at a later time. Therefore, the only absorbing state of the model is 0, the state at which the whole community consists of introduced individuals and the native species is extinct. Note that in the absence of immigration, the native species would be able to persist for an infinite amount of time, since by assumption the total number of individuals in the community is constant. 

In the symmetric case ($w=1$), the rare species has an advantage over the more common species for $\alpha < 1$ because then $c(n,K-n)=n + \alpha (K-n) < K-n + \alpha n=c(K-n,n)$ if $n < K/2$. This leads to fluctuations of the system around a point (which we will call the coexistence point) in which each species has population size $K/2$. For asymmetric competition ($w \neq 1$), whether coexistence is possible on an intermediate time scale, depends on the values of $\alpha$ and $w$. Also the position of the coexistence point depends on these parameters. 

Let $T_n$ be the random time to extinction of the native species in a realization of the process that starts with $n$ native individuals. Let $\tau_n$ and $\sigma_n^2$ denote the expected value and the variance of $T_n$. We will use the ecological model to compute the expected value $\tau_K$ and the variance $\sigma_K^2$ of the time to the extinction of the native species when it is starting with population size $K$, i.e.\ in the state in which the nonnative species is still absent.

\subsection{The eco-genetic model}
\label{sec:ecogenmod}
Now we extend the ecological model by including a genetic component. To keep the model tractable, we chose the simplest possible genetic scenario: We assume that each native individual is haploid and possesses one bi-allelic locus which determines the individual's response to some environmental factor, for example whether or not the individual is resistant to a certain parasite. At any point in time, one of the two alleles is favored and its carriers have fecundity 1, whereas other native individuals have fecundity $1-s$. Thus $s$ is a measure for the strength of selection. With probability $u$ an offspring mutates to the respective other allele and at rate $\epsilon$ the environment and with it the currently favored allele changes.

We assume that before introductions start, the number of native individuals that carry the favored allele has reached a stationary distribution, i.e.\ it is in mutation-selection equilibrium (see section \ref{sec:ecogenetic_full} for a derivation of this stationary distribution). To be able to compare the results of the eco-genetic model to those of the ecological model with the same introduced species fecundity parameter $w$, we multiplied $w$ in the eco-genetic model by the average fecundity $w^*$ (see \eqref{eq:w*}) of native individuals under the stationary distribution. Thus in both models, $w$ can be interpreted as the fecundity of the introduced species relative to the average fecundity of native individuals. 

This model can be represented as a Markov process, where the state with $n$ native individuals, $m$ of which carry the currently favored allele, is denoted by $(n,m)$. From this state, we can reach the states $(n+1,m),(n+1,m+1),(n,m+1),(n,m-1),(n-1,m-1)$, and $(n-1,m)$ through birth-death events, which possibly involve a mutation in the first four cases. The two latter states can also be reached through introduction events. The transition rates are defined analogously to those in the ecological model, with the death rate proportional to the competition experienced and the probability of giving birth proportional to fecundity. The main difference to the ecological model is that now the native population is divided into two allelic classes between which individuals can switch by mutation. 

As an example (see \ref{sec:ecogenetic_full} for all other transition rates), the transition rate from state $(n,m)$ to state $(n,m+1)$ is
\begin{equation}
\frac{c(n,K-n)\cdot (n-m)}{K}\cdot \frac{m \cdot (1-u)+(1-s)(n-m)\cdot u}{m+(1-s)(n-m)+w^* \cdot w \cdot (K-n)}\:.
\end{equation}
This is the rate at which one of the $n-m$ native individuals carrying the disfavored allele dies multiplied by the probability that it is replaced by a native individual with the favored allele. This new individual can either be the offspring of one of the $m$ individuals with a favored allele that did not mutate (probability $1-u$) or an offspring of one of the $n-m$ parents with the disfavored allele that mutated (probability $u$). The transition from $(n,m)$ to $(n,n-m)$ represents a change of the environment and happens at rate $\epsilon$.

Our goal here is to compute the expected time to extinction of the native species. Here we start in the state $(K,m^*)$, where the introduced species is absent and $m^*$ is the average number of native individuals that carry the favored allele under the stationary distribution rounded to the next integer.

\section{Results}

\subsection{Ecological effect}
Measured from the start of introductions when the native population has still size $K$, the persistence time of the native species has expectation
\begin{equation}
\tau_K = \mathbf{E}[T_K]=\sum_{l=1}^{K}\sum_{i=l}^{K}\frac{1}{\mu_i} \prod_{j=l}^{i-1} \frac{\lambda_j}{\mu_j}
\label{eq:T_K_solution}
\end{equation}
and its variance is
\begin{equation}
\sigma_K^2= \mathbf{Var}(T_K)=\sum_{l=1}^{K}\sum_{i=l}^{K} \eta_i \prod_{j=l}^{i-1} \frac{\lambda_j}{\mu_j}\:,
\label{eq:variancesolution}
\end{equation}
where
\begin{equation}
\eta_i = \begin{cases}
\frac{1}{\mu_i(\mu_i+\lambda_i)}\left[1+\mu_i \lambda_i \left(\tau_{i+1}-\tau_{i-1}\right)^2\right] & \text{if }1\leq i \leq K-1 \\
\frac{1}{\mu_K^2} & \text{if } i=K
\end{cases}\:.
\label{eq:etas}
\end{equation}

The results \eqref{eq:T_K_solution} and \eqref{eq:variancesolution} are derived by noticing that the time to the extinction of the native species when starting in a state $n \in \{1,\dots,K\}$ has the same distribution (denoted $\overset{d}{=}$) as the sum of two independent random variables: 
\begin{equation}
T_{n} \overset{d}{=} S_{n} + T_{N'} \overset{d}{=} S_{n} + \begin{cases}
T_{n+1} & \text{ with probability } \frac{\lambda_n}{\lambda_n + \mu_n} \\
T_{n-1} & \text{ with probability } \frac{\mu_n}{\lambda_n + \mu_n}
\end{cases}
\label{eq:decomposition}
\end{equation}
for $n \in \{1,\dots,K-1\}$ and $T_{K} \overset{d}{=} S_{K} + T_{K-1}$. Here, $S_{n}$ is a random variable for the time until the native population size first changes from state $n$ to a new state $N'$, which in our model is either $n-1$ or $n+1$. $S_{n}$ is exponentially distributed with parameter $\lambda_n + \mu_n$ for $n \in \{1,\dots,K-1\}$ and with parameter $\mu_K$ for $n=K$. Taking the expectation on both sides in \eqref{eq:decomposition} and using $T_{0}=0$ we obtain the following recursion for $\tau_n$ \citep[see][p. 145-150 for similar problems and solutions]{3100Karlin1975}:
\begin{equation}
\tau_n = \begin{cases}
0 & \text{if } n=0 \\[0.5em]
\frac{1}{\lambda_n + \mu_n} + \frac{\lambda_n}{\lambda_n + \mu_n}\tau_{n+1} + \frac{\mu_n}{\lambda_n + \mu_n}\tau_{n-1} & \text{if }1\leq n \leq K-1 \\[0.5em]
\frac{1}{\mu_K}+\tau_{K-1} & \text{if } n=K
\end{cases}\:,
\label{eq:recursion_full}
\end{equation}
which can be solved for $\tau_K$. 

Using the law of total variance 
\begin{equation}
\mathbf{Var}(T_{N'})= \mathbf{E}\left[\mathbf{Var}(T_{N'}|N')\right]+\mathbf{Var}\!\left(\mathbf{E}[T_{N'}|N']\right)=\mathbf{E}[\sigma_{N'}^2]+\mathbf{E}[\tau_{N'}^2]-\mathbf{E}[\tau_{N'}]^2\:,
\end{equation}
we obtain a similar recursion for the variance
\begin{equation}
\sigma_n^2 = \begin{cases}
0 & \text{if } n=0 \\[0.5em]
\frac{1}{(\lambda_n + \mu_n)^2} + \frac{\lambda_n}{\lambda_n + \mu_n}\left(\sigma_{n+1}^2+\tau_{n+1}^2\right) + \frac{\mu_n}{\lambda_n + \mu_n}\left(\sigma_{n-1}^2+\tau_{n-1}^2\right) & 
\\- \left(\frac{\lambda_n}{\lambda_n + \mu_n}\tau_{n+1} + \frac{\mu_n}{\lambda_n + \mu_n} \tau_{n-1}\right)^2 & \text{if }1\leq n \leq K-1 \\[0.5em]
\frac{1}{\mu_K^2}+\sigma_{K-1}^2 & \text{if } n=K
\end{cases}\:,
\label{eq:recursion_variance}
\end{equation}
which can be solved analogously to \eqref{eq:recursion_full} once the $\tau_n$ are known (see \ref{sec:recursionsolution} for details of these derivations).

The result given by \eqref{eq:T_K_solution} reveals that with increasing strength of competition, the expected time to extinction $\tau_K$ decreases until interspecific and intraspecific competition are of similar strength ($\alpha \approx 1$) (Fig. \ref{fig:omega0} A). Here, $\tau_K$ reaches a minimum. If the strength of interspecific competition is further increased, $\tau_K$ grows again. The minimizing competition coefficient is below one for low introduction rates, and above one for large introduction rates (Fig. \ref{fig:omega0} B).
For low introduction rates ($\gamma<1$), the minimum moves towards lower competition coefficients, i.e. weaker competition, if fecundities are unequal, no matter whether the introduced species has a higher ($w > 1$) or a lower fecundity parameter ($w <1$) than the native species (Fig. \ref{fig:delta_effect} A). As expected, $\tau_K$ decreases with increasing introduction rate and increasing fecundity advantage of the introduced species.

These patterns in $\tau_K$ are paralleled by a corresponding pattern in the variance of the expected time to extinction $\sigma_K^2$. The variance increases with increasing $\tau_K$ and thus also exhibits a minimum. To compare the distribution of extinction times to an exponential distribution where the expected value equals the standard deviation, we computed the ratio between the standard deviation and the expected value of the persistence time in our model. This ratio is close to one for parameter combinations that lead to a high persistence time and below one for parameter combinations where the extinction of the native species is relatively fast (Fig. \ref{fig:varianceratio}).

\begin{figure}
  \centering
  \subfloat[]{\label{fig:omega0macro}\includegraphics[width=0.5\textwidth,trim = 0mm 0mm 0mm 15mm, clip]{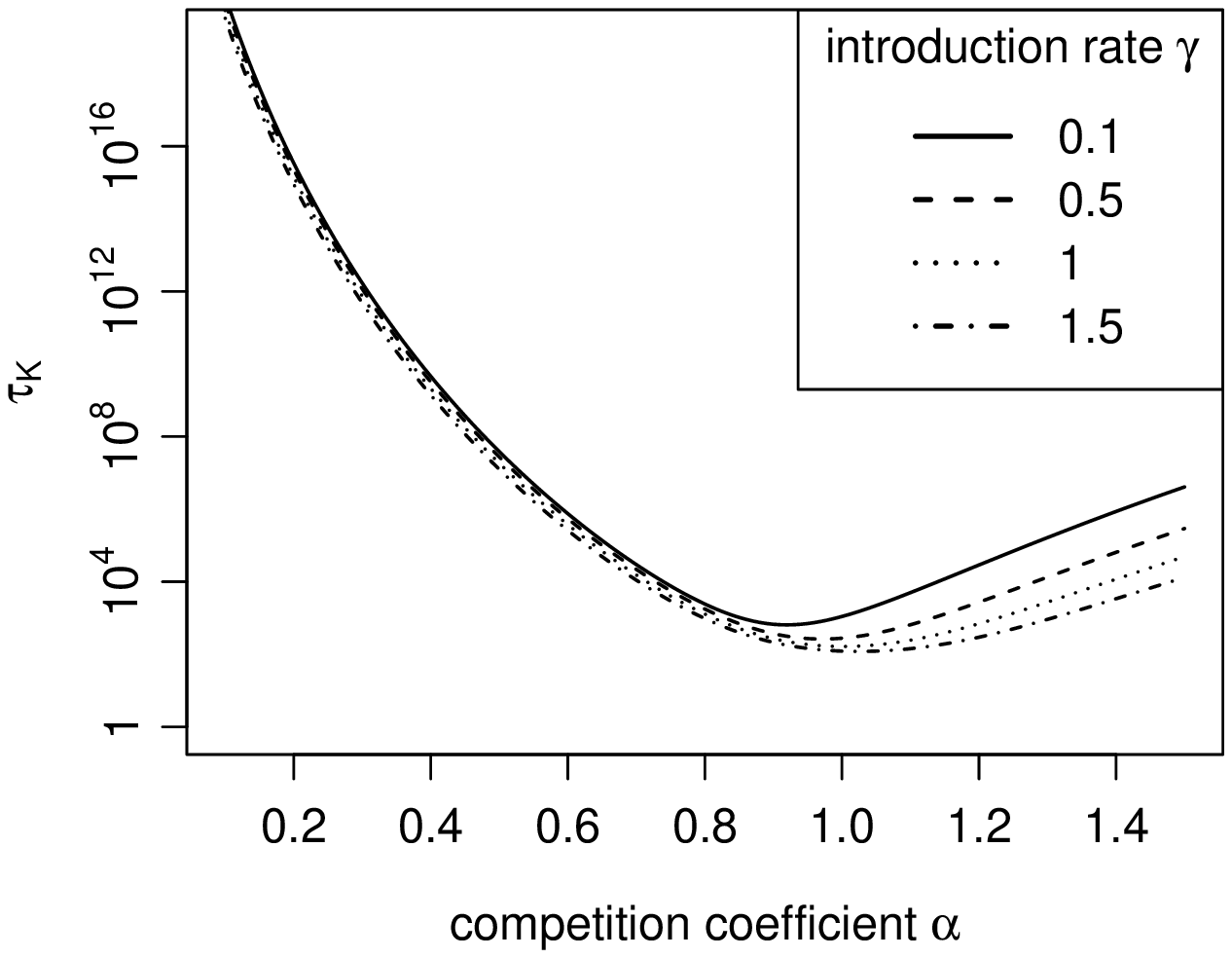}}                
  \subfloat[]{\label{fig:omega0zoom}\includegraphics[width=0.5\textwidth,trim = 0mm 0mm 0mm 15mm, clip]{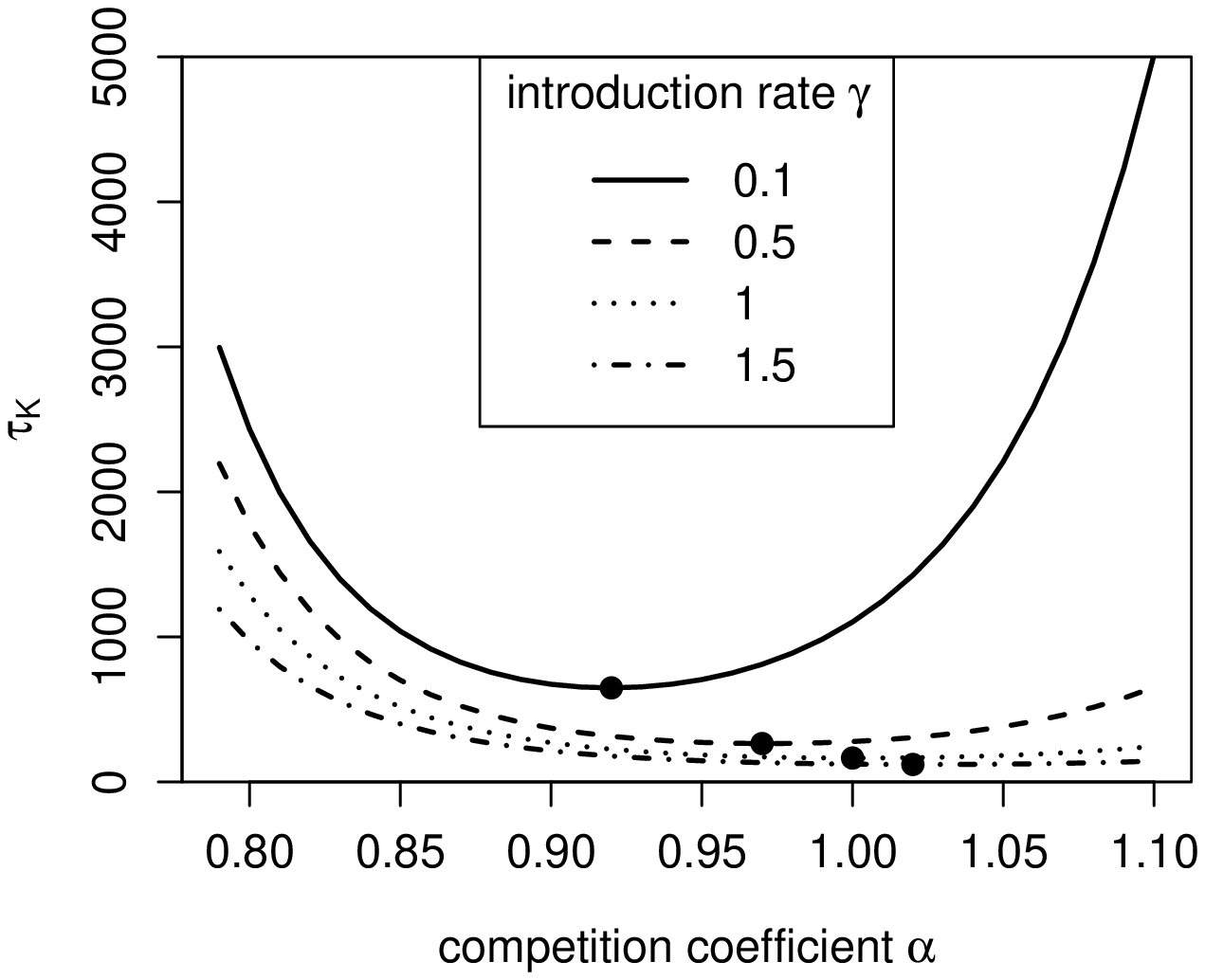}}
  \caption{The expected time to extinction $\tau_K$ for different values of the introduction rate $\gamma$ as a function of the competition coefficient $\alpha$. (B) magnifies the part of (A) around $\alpha=1$. The minimum of each curve in B is indicated by a solid point. ($K$ = 100, $w=1$.) }
  \label{fig:omega0}
\end{figure}
\begin{figure}
  \centering
  \includegraphics[height=.8\textheight,trim = 0mm 0mm 0mm 5mm, clip]{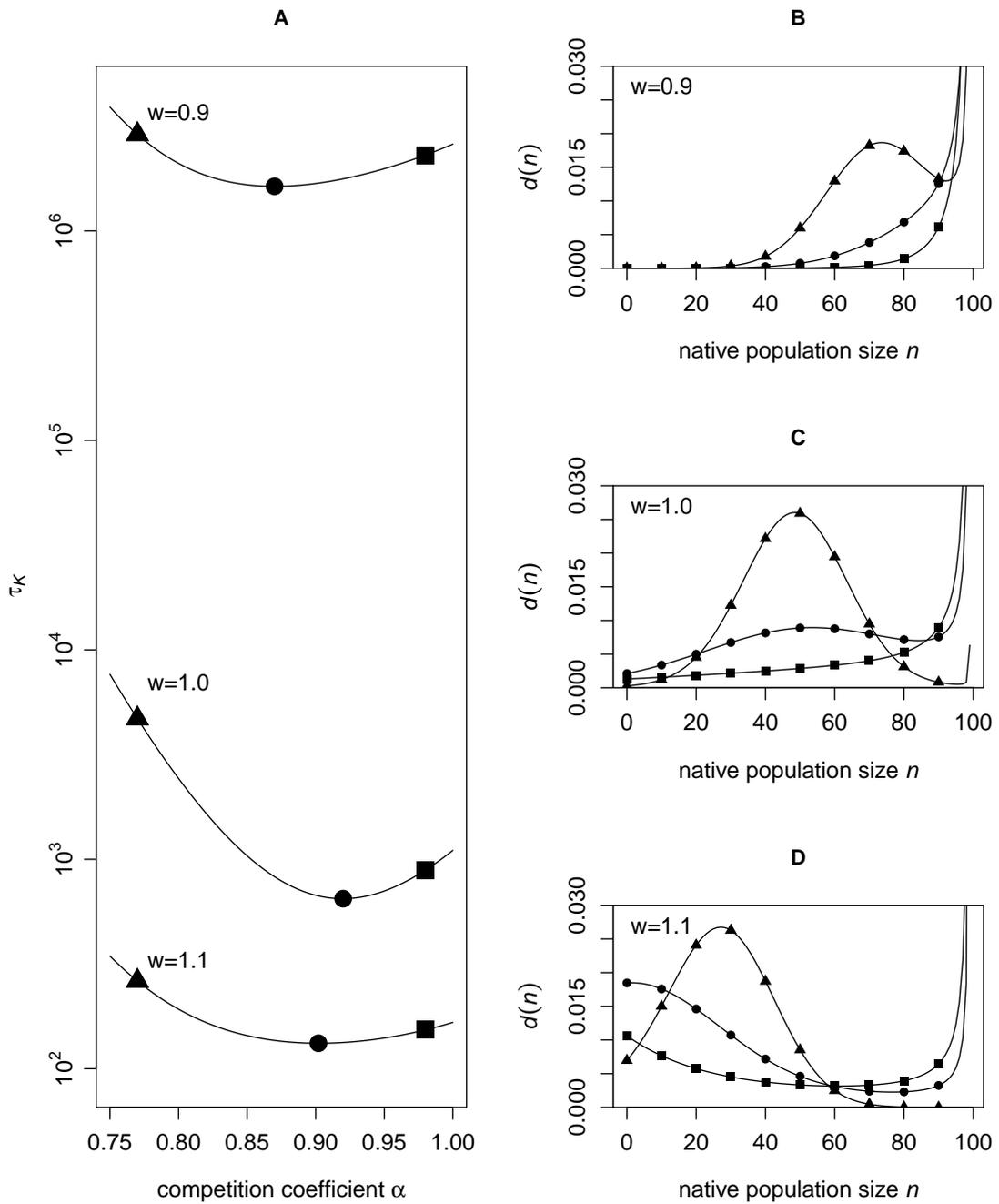} 
  \caption{The effect of changes in the introduced species fecundity parameter $w$ on the expected time to the extinction of the native species and the position of the minimum (A) and on the quasi-stationary distribution $(d(1),\dots,d(K))$ of the Markov process  conditional on non-extinction of the native species (right column). The curves in the right column correspond to the competition coefficients marked with the respective symbol in A. ($K=100$, $\gamma=0.1$.)}
  \label{fig:delta_effect}
\end{figure}
\begin{figure}
\centering
 \subfloat[]{\includegraphics[width=0.5\textwidth,trim = 0mm 0mm 0mm 15mm, clip]{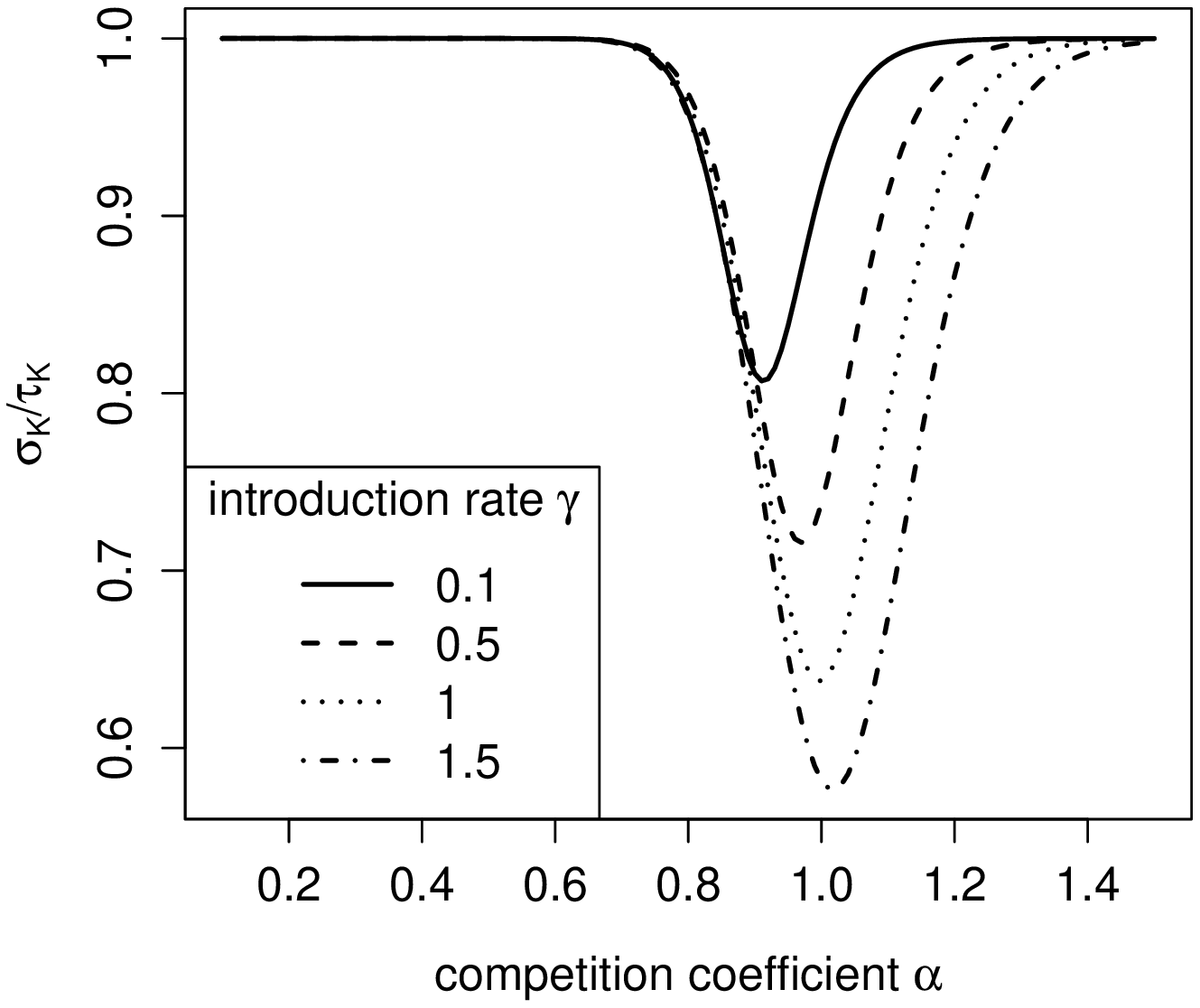}}
 \subfloat[]{\includegraphics[width=0.5\textwidth,trim = 0mm 0mm 0mm 15mm, clip]{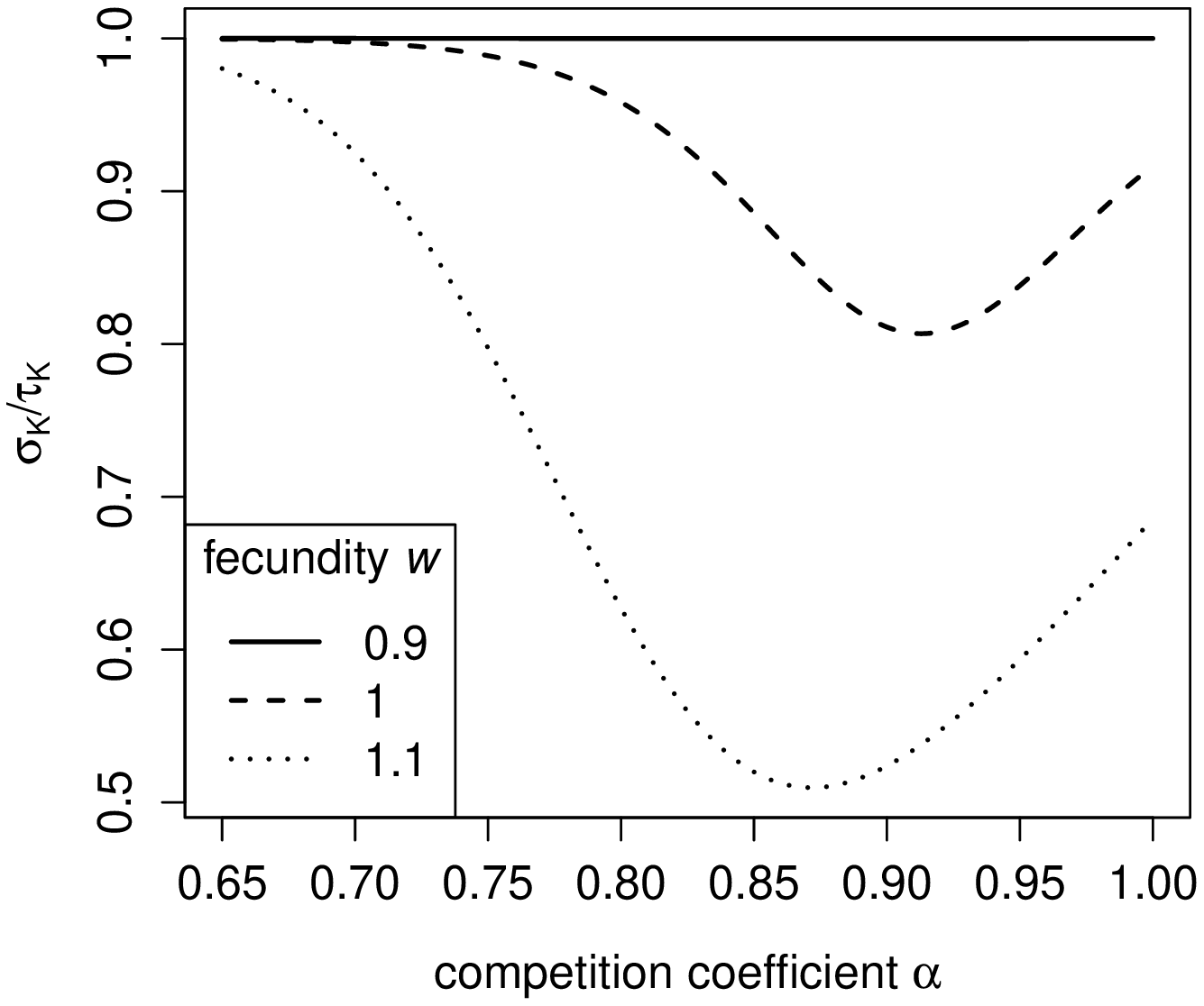}}
\caption{The ratio of the standard deviation and the expected value of the time to extinction for (A) different introduction rates with $w=1$ and (B) for different values of the introduced species fecundity with $\gamma=0.1$. ($K=100$.)}
\label{fig:varianceratio}
\end{figure}

\vspace{3mm}
The numerical evaluation of \eqref{eq:T_K_solution} is practical only for small community sizes. For moderate to large community sizes we derived an approximation for the expected time to extinction that is easier to compute and gives more insight into the dependence of persistence time on the parameters:
\begin{equation}
\tau_K \approx K \cdot \int_0^1 \frac{1}{(1-\xi)^{\gamma}} \int_\xi^{1} \frac{(1-\eta)^{\gamma-1}}{\eta}\cdot e^{\beta\left[\xi(1-\xi)-\eta(1-\eta)\right]} \cdot e^{\delta (\xi-\eta)} d\eta \: d\xi,
\label{eq:T_Kdiff}
\end{equation}
where $\beta := (\alpha-1)\cdot K$ is the rescaled advantage of being common and $\delta := (w - 1)\cdot K$ is the rescaled fecundity advantage of the introduced species.
\vspace{3mm}

The result \eqref{eq:T_Kdiff} is based on a diffusion approximation of the birth and death process described by equations \eqref{eq:lambda_def} and \eqref{eq:mu_def}. The process is rescaled such that the native population size is expressed as a fraction of the total community and time is sped up by a factor $K$:
\begin{equation}
X_K = \left(X_K(t)\right)_{t\geq0}=\left(\frac{N(K\cdot t)}{K}\right)_{t \geq 0}.
\label{eq:rescaledprocess}
\end{equation}
In the limit as $K$ goes to infinity while $\beta$ and $\delta$ are held constant, $X_K$ converges in distribution to the diffusion process $X$ with infinitesimal generator \citep[see][p. 195, and the supplementary material for a derivation]{3238Karlin1981}: 
\begin{linenomath}\begin{align}
Lf(x) & :=\frac{d}{dt} E\big[f\big(X\left(t\right)\big)|X(0)=x\big]\big{|}_{t=0} \nonumber \\
 & = x(1-x)\frac{d^2}{dx^2}f(x) + \big(-\beta \left(1-2x\right)\left(1-x\right)x - \gamma x - \delta x (1-x)\big)\frac{d}{dx}f(x)\:.
\label{eq:generator}
\end{align}\end{linenomath}

The expected time to extinction of the native species when its starting frequency is $x$ is a solution $g(x)$ of the differential equation \citep[][p. 193]{3238Karlin1981}
\begin{equation}
Lg(x)=-1\:
\label{eq:differentialequation}
\end{equation}
with boundary conditions $g(0)=0$ and $|\lim\limits_{x \nearrow 1} g'(x)|<\infty$, where $\lim\limits_{x \nearrow 1} g'(x)$ denotes the left-sided limit at 1 (see supplementary material for details).

By numerically evaluating \eqref{eq:T_Kdiff} in R \citep{575Team2009} and using a golden section search algorithm \citep{1227Heath2002} we computed $r(\gamma,\delta)$, the value of $\beta$ that minimizes the right hand side of \eqref{eq:T_Kdiff} for given values of $\gamma$ and $\delta$. After rescaling, we obtained for the competition coefficient $\hat{\alpha}$ which minimizes the expected time to the extinction of the native species.
\begin{equation}
\hat{\alpha} = 1+ \frac{r(\gamma,\delta)}{K}\:.
\end{equation}
With increasing introduction rate $\gamma$, $r(\gamma,\delta)$ increases (Fig. \ref{fig:rgammadelta} A), becoming positive at $\gamma=1$. The absolute value of $r(\gamma,\delta)$ increases with the differences in fecundity between the species (Fig. \ref{fig:rgammadelta} B).

\begin{figure}
  \centering           
  \subfloat[]{\label{fig:r(delta)}\includegraphics[width=0.5\textwidth,trim = 0mm 0mm 0mm 15mm, clip]{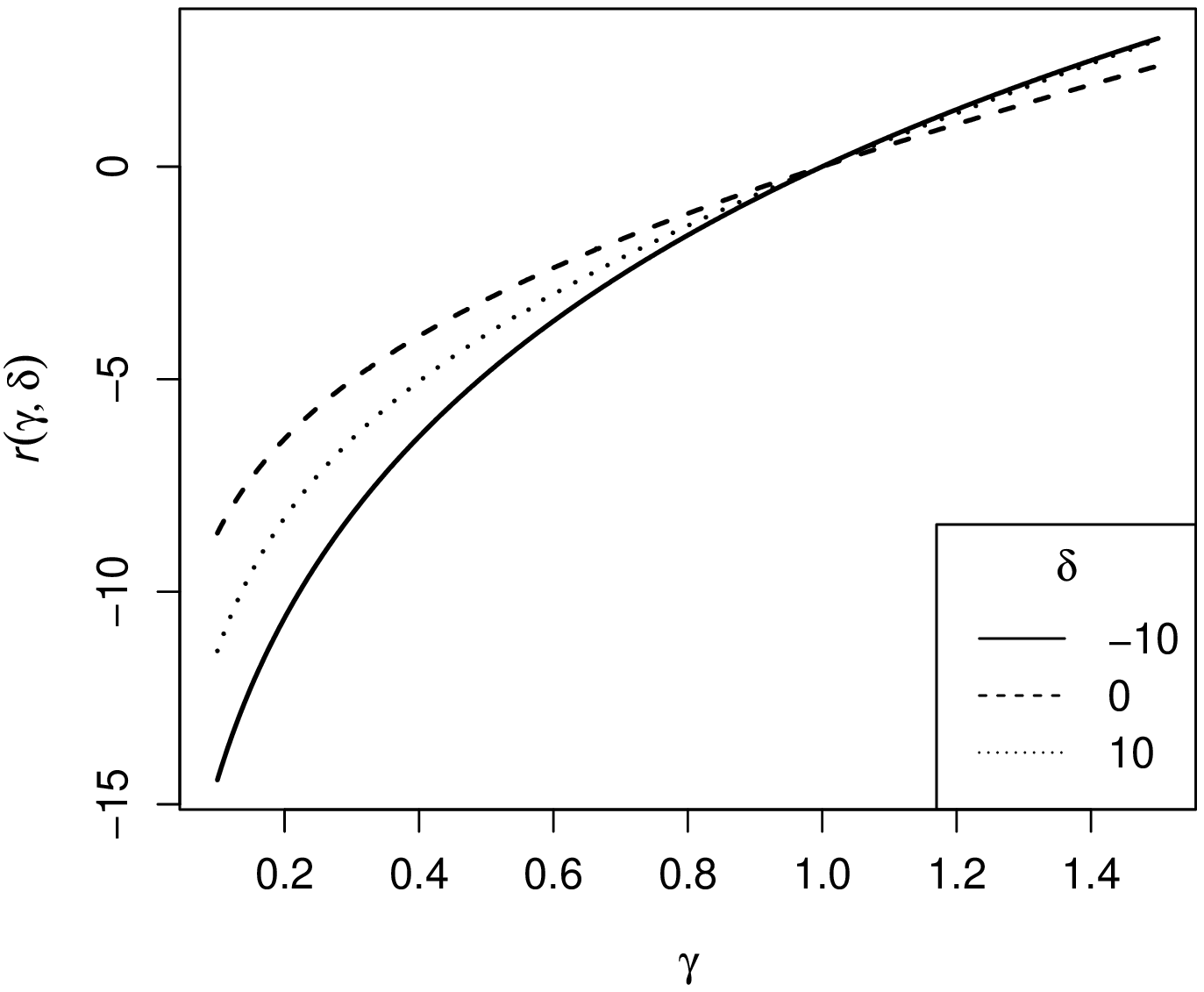}}
  \subfloat[]{\label{fig:r(gamma)}\includegraphics[width=0.5\textwidth,trim = 0mm 0mm 0mm 15mm, clip]{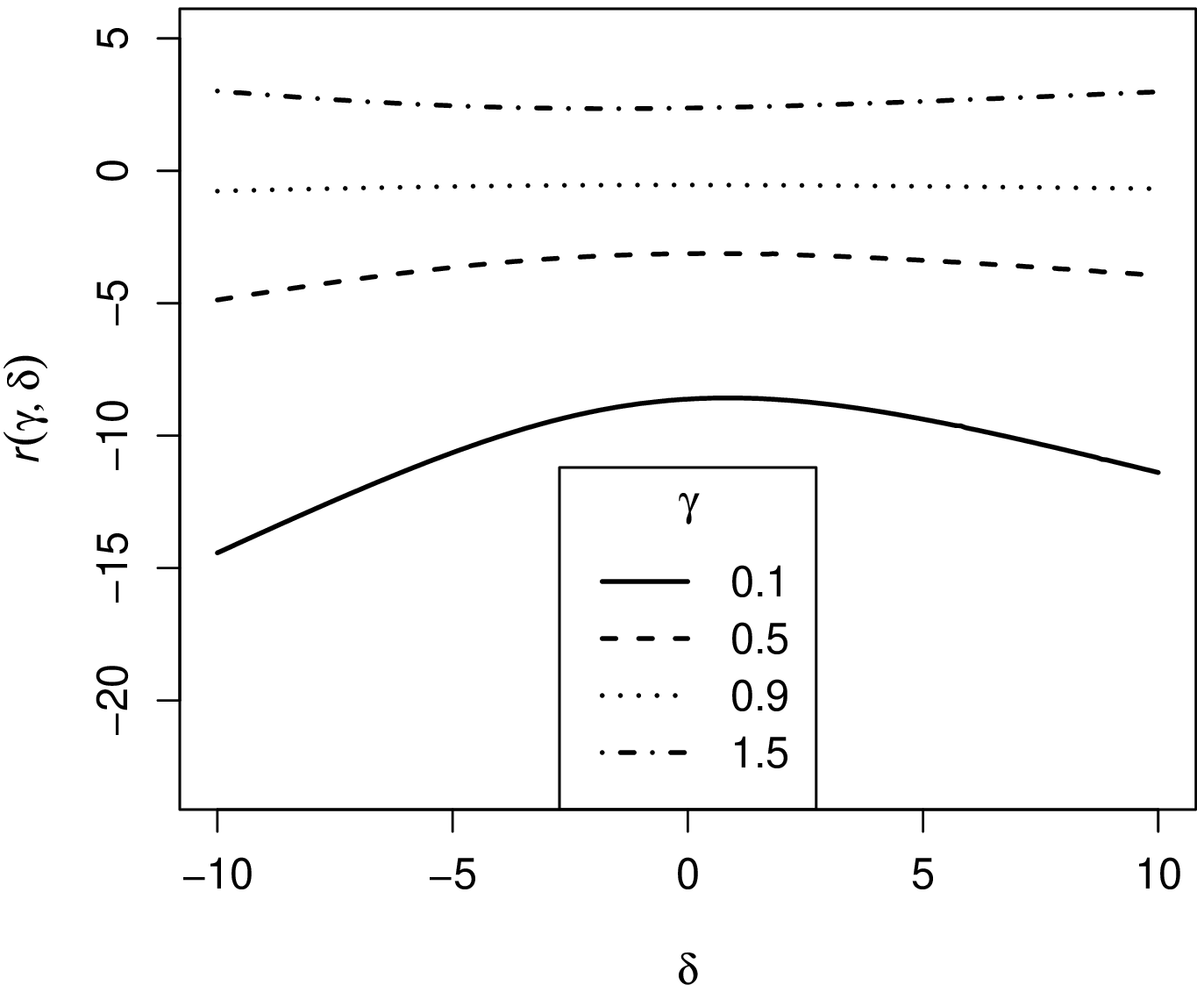}}     
  \caption{The minimizing rescaled competition coefficient $r(\gamma,\delta)$ as a function of the introduction rate $\gamma$ with fixed rescaled fecundity parameter $\delta$ (A) and as a function of $\delta$ with fixed $\gamma$ (B).}
  \label{fig:rgammadelta}
\end{figure}

\subsection{Eco-genetic effect}
As in \eqref{eq:recursion_full}, we recursively computed $\tau_{(n,m)}$, the expected time to the extinction of the native species when starting in state $(n,m)$, by decomposing it according to what happens at the first jump. Doing this for all states gave rise to a system of $\frac{(K+1)(K+2)}{2} -1$ linear equations, which we solved numerically in R for $\tau_{(K,m^*)}$, where $m^*$ is the average initial number of native individuals with the favored allele.

Although we adjusted the fecundity of the introduced species to match the average fecundity of the native species, the expected time to extinction of the native species is lower under the eco-genetic model than under the ecological model described in Section \ref{sec:ecomod} (Fig. \ref{fig:ecoevo}). Over wide regions in parameter space, $\tau_{(K,m^*)}$ decreases with increasing selection strength $s$ acting on the native species (Fig. \ref{fig:ecoevo} A) and increases with increasing mutation probability $u$ (Fig. \ref{fig:ecoevo} B). Fig. \ref{fig:ecoevo} C indicates that in the absence of environmental change, the expected time to extinction is similar to its counterpart in the ecological model. For a non-zero rate of change, the expected time to extinction is reduced, with a particularly strong reduction at small competition coefficients. In all cases, the minimizing competition coefficient (indicated by solid points in Fig. \ref{fig:ecoevo}) is reduced if we take into account the eco-genetic feedback.

\begin{figure}
  \centering
  \subfloat[]{\label{fig:ecogenetic_s}\includegraphics[width=0.35\textwidth,trim = 0mm 0mm 0mm 10mm, clip]{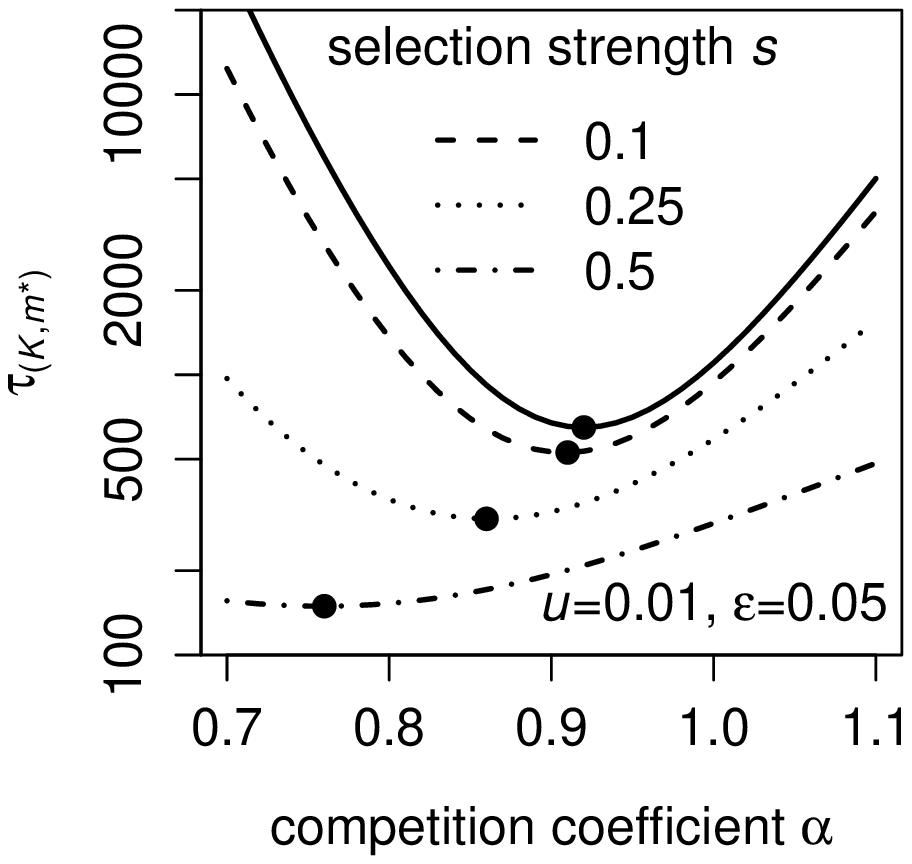}}
  \subfloat[]{\label{fig:ecogenetic_u}\includegraphics[width=0.35\textwidth,trim = 0mm 0mm 0mm 10mm, clip]{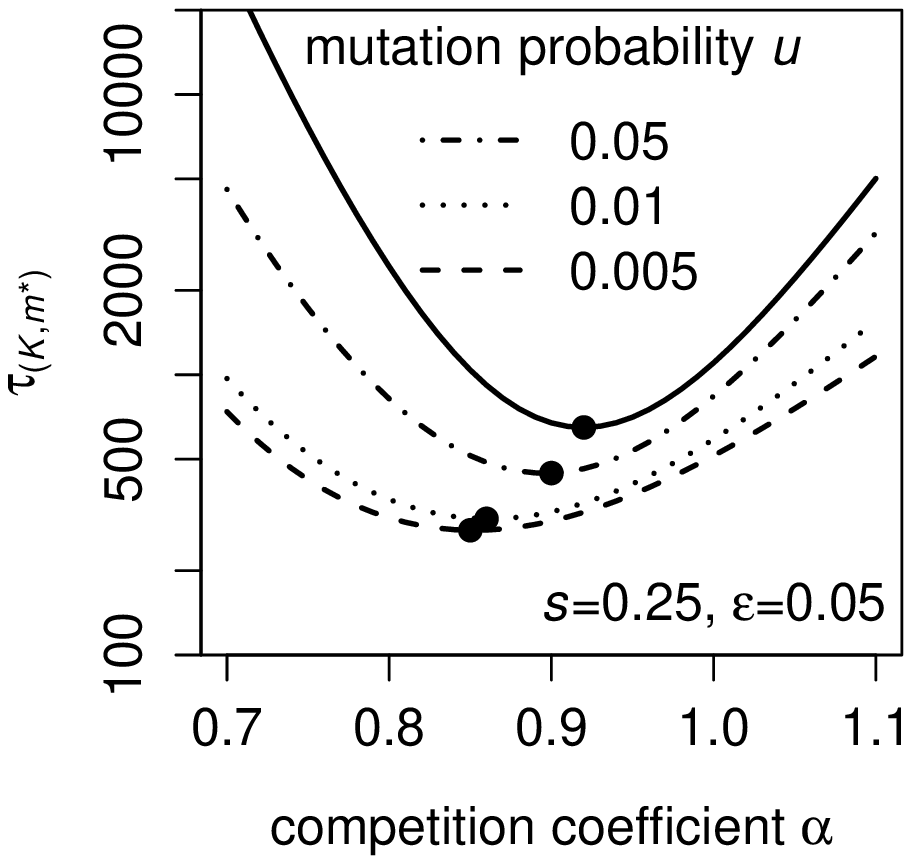}}
  \subfloat[]{\label{fig:ecogenetic_epsilon}\includegraphics[width=0.35\textwidth,trim = 0mm 0mm 0mm 10mm, clip]{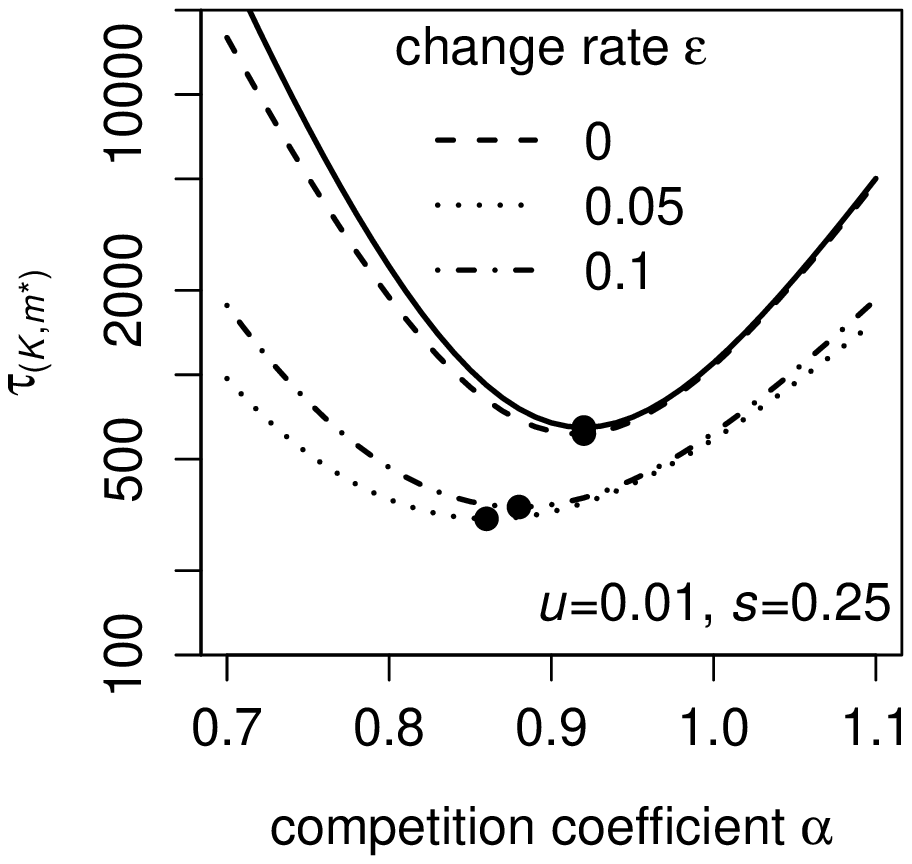}}
  \caption{The expected time to the extinction of the native species under the eco-genetic model as a function of the competition coefficient for different selection coefficients (A), different mutation probabilities (B), and different rates of environmental change (C). The solid line corresponds to the expected time under the ecological model (Eq. \eqref{eq:T_K_solution}). Minima are indicated by solid points. ($K=100, \gamma=0.1, w=1$.)}
  \label{fig:ecoevo}
\end{figure}

\subsection{Critical introduction rate}
We set a threshold persistence time $\tau_{crit}$ and determined the critical introduction rate $\gamma_{crit}$, such that $\tau_K > \tau_{crit}$ for all $\gamma < \gamma_{crit}$, using a bisection algorithm \citep{1227Heath2002} for both the ecological and the eco-genetic model. To be able to compare the two models, we adjusted the fecundity parameter of the introduced species as above to match the average fecundity of the native species under mutation-selection equilibrium in a population of size $K$.
As was the case with the expected time to extinction for fixed introduction rate, the critical introduction rate also reaches a minimum at an intermediate competition coefficient (Fig. \ref{fig:gammacrit}). Not surprisingly, the critical introduction rate decreases with increasing fecundity advantage of the introduced species. Taking the ecogenetic effect into account, the critical introduction rate is lower than under the purely ecological model.
\begin{figure}
  \centering
  \includegraphics[width=0.5\textwidth]{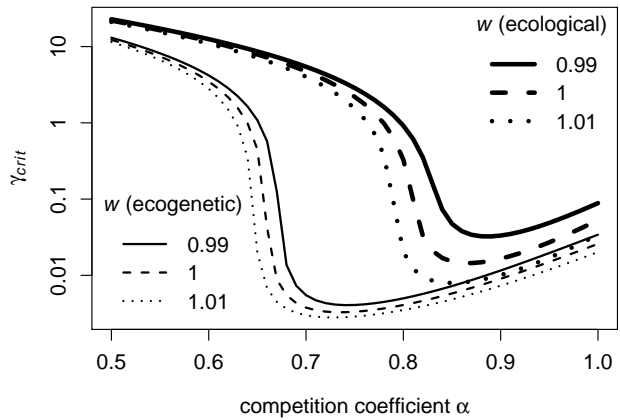}            
  \caption{The critical introduction rate (on a logarithmic scale) for different fecundity parameters $w$ under the ecological and eco-genetic model as a function of the competition strength. In the eco-genetic case, $w$ was multiplied by the average fecundity of native individuals under mutation-selection balance. ($\tau_{crit}=2000$, $\epsilon=0.05$, $s=0.25$, $u=0.01$.)}
  \label{fig:gammacrit}
\end{figure}

\section{Simplifications and heuristics}
To intuitively understand why competitors with intermediate interaction coefficients lead to the lowest persistence time of the native species, we simplify the state space to three possible states (Fig. \ref{fig:simplemodel} A): the introduced species is absent and the whole community consists of the native species ($N$), both native and introduced species coexist ($N\&I$), or the native species is extinct and the whole community consists of the introduced species ($I$). If both species have the same fecundity ($w = 1$) and the introduction rate is small, native and introduced species exclude each other with approximately the same probability from the coexistence point and the rate $\phi(\alpha)$ at which this happens depends very little on $\gamma$. Let $\psi(\alpha,\gamma)$ be the rate at which the nonnative species establishes. Then, analogously to \eqref{eq:recursion_full}, we can formulate a recursion for the expected time to extinction of the native species:
\begin{equation}
\tau_N = \frac{1}{\psi(\alpha,\gamma)}+\tau_{N\&I}
\end{equation}
and
\begin{equation}
\tau_{N\&I}=\frac{1}{2\phi(\alpha)}+\frac{1}{2}\tau_N\:.
\end{equation}

As solution for the expected time to extinction of the native species when there is currently no introduced individuals we obtain
\begin{equation}
\tau_N = \frac{2}{\psi(\alpha,\gamma)}+\frac{1}{\phi(\alpha)}\:.
\end{equation}
This is essentially twice the sum of the expected sojourn times in the states $N$ and $N\&I$. What are these times?
In the full model, $1/\psi(\alpha,\gamma)$ approximately corresponds to the expected time for the introduced species to reach population size $K/2$ starting from size 0. We will refer to this time as the establishment time of the introduced species. The term $1/(2\phi(\alpha))$ is the expected time for one of the two species to go extinct when they are currently coexisting with population size $K/2$ each. We will call this the exclusion time. Expressions for establishment time and exclusion time as functions of the model transition rates \eqref{eq:lambda_def} and \eqref{eq:mu_def} were obtained by solving recursions similar to equations \eqref{eq:recursion_full} (see supplementary material).

The expected establishment time is an increasing function of the competition coefficient (dashed lines in Fig. \ref{fig:simplemodel} B and C). The weaker the competition, the higher the advantage of an initially rare introduced population and the lower the expected time to reach $K/2$. The exclusion time (dotted lines) on the other hand is a decreasing function of the competition coefficient. The stronger the competition, the weaker the force is that drives the system back to the coexistence point, and the shorter the time is to the exclusion of one of the two species. Due to these two opposing effects the total time (solid lines) to the extinction of the native species, twice the sum of establishment time and exclusion time, can exhibit a minimum. 

The higher the introduction rate, the smaller the influence of the competition coefficient is on establishment time, and the flatter the curve of $1/(2\phi(\alpha))$ will be. This is the reason why the position of the minimum is shifted to higher values of $\alpha$ as the introduction rate $\gamma$ increases (Fig. \ref{fig:simplemodel} C). At $\gamma=1$ the boundary where the introduced species is absent becomes an entrance boundary for the diffusion process; this means that the process can start at this  boundary but can never return to it \citep[][p. 235]{3238Karlin1981}. Thus, for higher introduction rates, establishment is no longer a limiting factor. To speed up the exclusion of the native species, the minimizing competition coefficient is above one in this region of parameter space. 

In Figs. \ref{fig:delta_effect} and \ref{fig:rgammadelta} we observed that in cases where the introduced and the native species differ in fecundity, the minimizing competition coefficient differs more from one than in the symmetric case. An intuitive explanation for this phenomenon is that in asymmetric cases the dynamics are strongly shaped by the differences in fecundity and large changes in the competition coefficient are required to affect these dynamics, whereas in the symmetric case small changes in the competition coefficient can tip the balance.

\begin{figure}
  \centering
  \subfloat[]{\label{fig:simplemodelstructure}\includegraphics[width=0.25\textwidth]{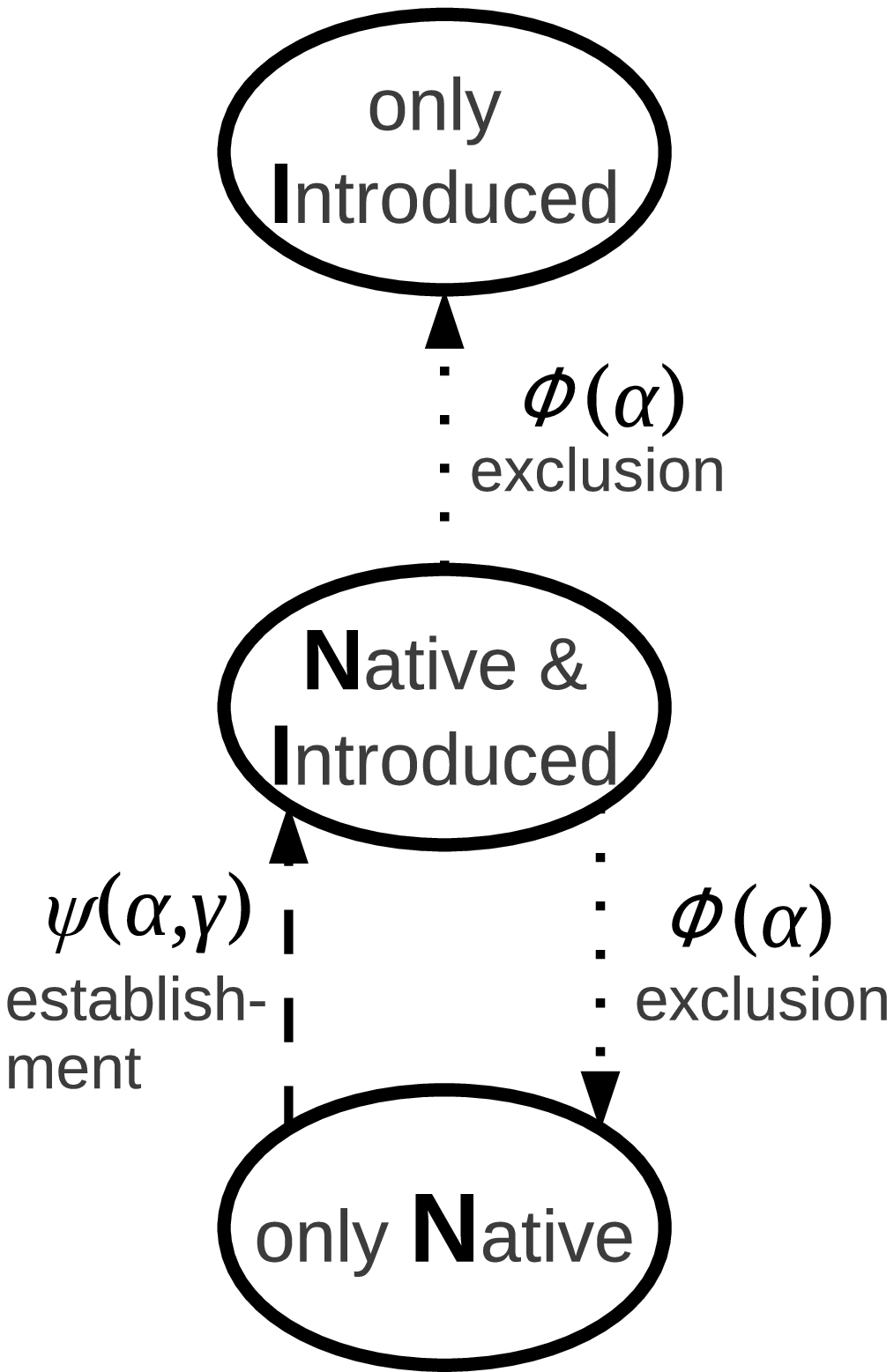}}                
  \subfloat[]{\label{fig:gammalow}\includegraphics[width=0.35\textwidth]{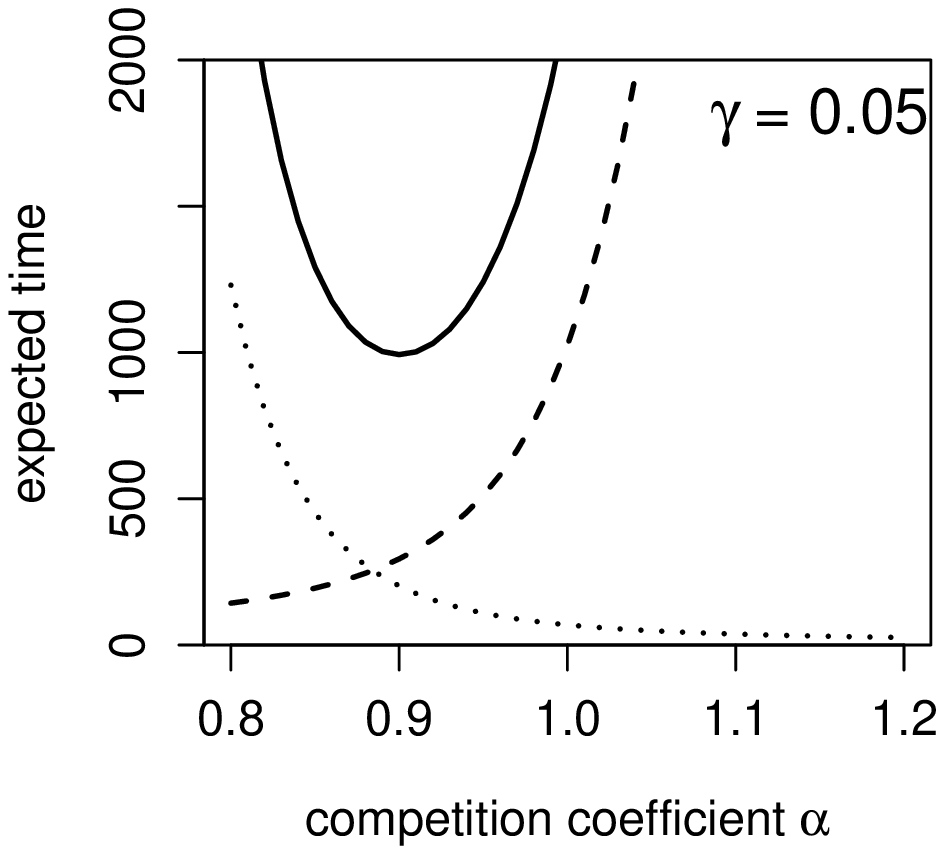}}
  \subfloat[]{\label{fig:gammahigh}\includegraphics[width=0.35\textwidth]{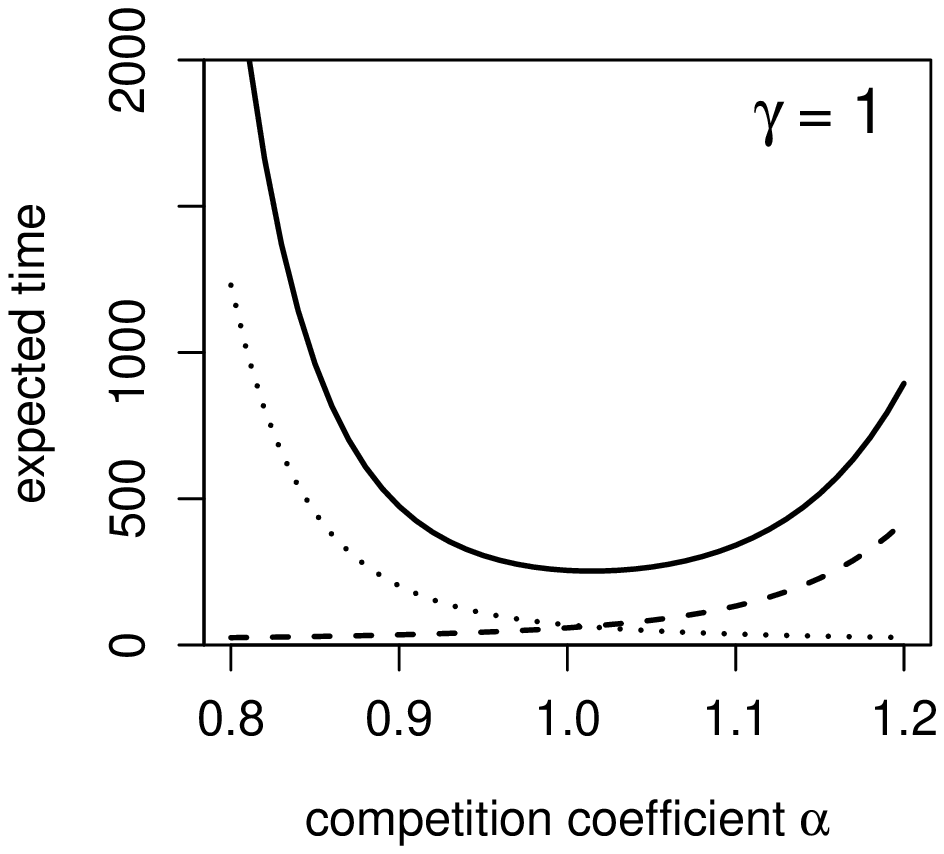}}
  \caption{A simple version of the model with only three states (A) illustrates the antagonistic effects of competition strength on establishment time (dashed lines) and exclusion time (dotted line) that lead to a minimum in the total time to native species extinction (solid line) which is at a smaller value of $\alpha$ in the case of a low introduction rate (B) and at a higher value for a high introduction rate (C). ($K=100, w=1$.)}
  \label{fig:simplemodel}
\end{figure}

To understand in which states the system spends most of its time, it is useful to examine the quasi-stationary distribution of the Markov process, the limiting distribution of population sizes of the native species given that it is not extinct yet. This is computed by eliminating the first row and first column, which belong to the absorbing state 0, from the rate matrix of the Markov process (see \eqref{eq:Lambda}).
The left eigenvector of the remaining matrix associated with the eigenvalue with the largest real part is the quasi-stationary distribution  \citep{3153DARROCH1967}. In the right column of Fig. \ref{fig:delta_effect} the quasi-stationary distribution is visualized for different parameter combinations. For competition coefficients on the right side of the minimum (indicated by squares), the introduced species is absent or has a low population size most of the time. For competition coefficients below the minimizing competition coefficient (triangles), the time the system spends around the coexistence point contributes most to the expected time to the extinction of the native species. 

In regions of the parameter space where the quasistationary distribution is very stable, the ratio of the expected time to extinction and its standard deviation is almost one, suggesting that the time to extinction is approximately exponentially distributed. Moving towards parameter combinations that lead to a fast extinction of the native species, for example as the introduced species' fecundity is increasing (see Fig. \ref{fig:varianceratio} B), the standard deviation is decreasing relative to the expectation, suggesting that the extinction of the native species becomes more deterministic.

The shift of the minimum in expected extinction time to smaller values of the competition coefficient in the model with genetic feedback can also be understood from the quasi-stationary distributions in Fig. \ref{fig:delta_effect}.
If competition is weak, the native population is most likely of intermediate size, and therefore a substantially lower amount of genetic diversity can be maintained within the population compared to a population that makes up the whole community. For high competition coefficients, the native population is most likely near its carrying capacity, if it is still present, which we condition on, and has therefore almost its full adaptability. 
Thus the additional extinction risk, or the reduction in expected time to extinction, would be larger for small competition coefficients than for large competition coefficients. This effect can be seen in Fig. \ref{fig:ecoevo}, as the curves for the eco-genetic model diverge more from the curve belonging to the ecological model at lower than they do at higher competition coefficients.

\section{Discussion}
Our theoretical results indicate that the introduction of nonnative competitors raises the extinction risk of native species, both directly and indirectly, via a reduction in genetic diversity. The expected impact does not generally increase or decrease with competition intensity as one might expect, but there is an intermediate competition coefficient for which the expected time to extinction of the native species is minimized. This is the result of the opposing effects of competition strength on the establishment step of the invasion process and on the impact of an already established species. Introduced species that do not compete intensively with species from their new range can readily establish, but their ecological impacts are weak and it will take a long time for them to drive one of the native competitors to extinction. On the other hand, an introduced species that is competing very intensely with one of the native species has high potential ecological impacts once it has established. However, such a species may need a lot of introduction attempts before it can establish, because the native competitor can efficiently exclude it from the community.

Based on our results we expect competitors with intermediate interaction strength to also have the lowest critical introduction rate. Thus if one would set a management target to preserve an endangered native species for a certain time period, then the greatest introduction prevention efforts would be necessary for nonnative species that would have an intermediate intensity of competition with the native species. 

With the help of our eco-genetic model, we quantified the feedback between ecological and genetic effects of the introduced species on the native competitor. This feedback is synergistic in the sense that ecological and genetic effects enhance each other: A reduction in population size causes a reduction in genetic diversity and this reduction in diversity can lead to further population decline in a changing environment. We found that this eco-genetic feedback is particularly strong for small intensities of competition between the introduced and the native species. This is because for high competition coefficients, the native species spends most of the time until its extinction in states with a high population size, whereas for intermediate and low competition coefficients the native species spends most of the time coexisting with the introduced species at an intermediate population size with a corresponding reduction in genetic diversity. Because the eco-genetic effect makes competitors with a relatively small competition intensity more dangerous for the native species, the minimizing competition coefficient is smaller compared to the ecological model. Similarly, also the critical introduction rate is lower than in the purely ecological scenario. This highlights the importance of including eco-genetic feedbacks into risk assessment models. If we only bring the introduction rate down to the level required under the ecological model and there is an eco-genetic feedback, we will fail to keep the expected impacts below the prescribed threshold. 

Eco-genetic effects are also a possible explanation for why there is relatively little evidence in the invasion biology literature for native species extinction due directly to an introduced competitor \citep{3305Davis2003}: Most endangered species are not threatened by a single stressor, but by combinations of them, for example habitat change and invasive species \citep{1138Gurevitch2004}. The introduction of a competitor can weaken a native species' ability to respond to other stressors. There may have been many extinction events which were attributed to other factors and in which a significant contribution from an introduced competitor went unrecognized. Thus, when making predictions on native species population dynamics it can be important to consider the possibly synergistic interaction of species invasions with other drivers of global change \citep{3276Didham2007}.

Previous theoretical studies on the impacts of introduced species have built models designed to understand these impacts in specific systems \citep{514Byers2001,48Thomson2005}. Here we contribute a building block towards the development of a theory that predicts impact from parameters of the introduced species, the native community and the introduction process. In this first stage of theory development our focus has been on simple models that are analytically tractable and give us insight into general phenomena. Of course, these models could be extended in many ways to incorporate more biological realism or to adjust them to specific biological systems, e.g. by including age or stage structure, which might in some cases influence the outcome of invasion and extinction dynamics \citep[see e.g.][]{360LANDE1988}. 

Our finding, however, that the native species' persistence time is minimized at intermediate intensities of competition, is robust to a wide range of model modifications. To illustrate this robustness, we examined a model in which the assumption of a fixed community size was relaxed, a model in which competition affects fecundity instead of viability, and a model with an alternative formulation of the transition rates similar to the one used in neutral community theory \citep[see e.g.][]{702Etienne2007}. All these models produced a minimum at intermediate competition intensities (see supplementary material for details of the analyses). Moreover, some of these modified models have the same diffusion approximation as the original model and thus behave very similarly, at least for large community sizes. Even a model in which we allow an immigration of native individuals from outside, exhibits a minimum in the expected time to the first extinction of the native species. However, possible measures of the long-term impact based on the stationary distribution of this process, like the proportion of time during which the native species is absent or the average native population size, have a monotonic relationship with competition intensity. This highlights that our results are most relevant for the short-term impacts of an introduced species on its native competitors.

In this study, we provide a model for the expected effect of an introduced species on one native competitor. Of course, native communities may consist of multiple competitors, as well as predators, mutualists, and parasites. To predict the impact of an introduced species on a whole community, our model could be combined with models for the other components of the community and interactions between them. Such detailed models have not been analyzed thus far.
However, Fig. 2 in a study by \citet{1157Case1990} shows that the probability that an introduced species can establish in a multi-species competitive community and replaces at least one native species is maximized for intermediate mean competition coefficients. Although \citet{1157Case1990} does not address this point, this is one hint that our finding that intermediate levels of competition are most dangerous scales up to more complex communities.

In our model, single individuals of the nonnative species were introduced into the new habitat. What happens if multiple nonnative individuals are released at once? \citet{1052Drake2005} found that the product of introduction frequency and introduction size was a good predictor for the persistence of introduced \textit{Daphnia} populations and that adding introduction frequency and size as single factors did not lead to significant improvements. However, if the nonnative population is subject to an Allee effect, i.e.\ positive density-dependence of population growth at low densities \citep{2079Courchamp1999}, the expected establishment success can strongly differ between a scenario with frequent introductions of one or a few individuals and one with rare introductions of many individuals \citep{314Drury2007}. Disentangling the effects of propagule size and propagule frequency for such important scenarios is a promising field of future research.

\section*{Acknowledgments}
We would like to thank Jonathan Jeschke and Joachim Hermisson for helpful discussions, as well as the handling editor Peter Chesson, two anonymous reviewers, and Sara Troxell for valuable comments on the manuscript. MJW is grateful to the Studienstiftung des deutschen Volkes for funding this project.

\renewcommand{\appendixpagename}{Appendix}
\begin{appendices}
\section{Rate matrix of the ecological model}
\label{sec:ratematrix}
\setcounter{equation}{0}
\renewcommand{\theequation}{A.\arabic{equation}}
The rate matrix of the ecological model is
\begin{equation}
\Lambda = \footnotesize
\begin{pmatrix}
-\lambda_0 & \lambda_0 & 0 & 0 & \cdots & 0 & 0 & 0\\
\mu_1 & -\left(\lambda_1 + \mu_1 \right) & \lambda _1 & 0 & \cdots & 0 & 0 & 0\\
0 & \mu_2 & -\left(\lambda_2 + \mu_2 \right) & \lambda _2 & \cdots & 0 & 0 & 0\\
\vdots & \vdots & \vdots & \vdots & \ddots & \vdots & \vdots & \vdots \\
0 & 0 & 0 & 0 & \cdots & \mu_{K-1} & -\left(\lambda_{K-1} + \mu_{K-1} \right) & \lambda _{K-1}\\
0 & 0 & 0 & 0 & \cdots & 0 & \mu_{K} & -\mu_{K}
\end{pmatrix} \normalsize \:,
\label{eq:Lambda}
\end{equation}
where the $\lambda_i$ and $\mu_i$ are given by equations \eqref{eq:lambda_def} and \eqref{eq:mu_def}.

\section{Full specification of the eco-genetic model}
\label{sec:ecogenetic_full}
\setcounter{equation}{0}
\renewcommand{\theequation}{B.\arabic{equation}}
Let $(n,m)$ be the state with $n$ native individuals, $m$ of which carry the currently favored allele. Then the transition rates for $1 \leq n \leq K$ and $0 \leq m \leq n$ are
\begin{equation}
(n,m) \rightarrow \begin{cases}
(n+1,m): & \frac{c(K-n,n)\cdot (K-n)}{K}\cdot \frac{(1-s)(n-m)(1-u)+mu}{\bar{w}_{n,m}} \;\text{ for } n  < K \\[0.5em]
(n-1,m): & \frac{c(n,K-n) \cdot (n-m)}{K}\cdot \frac{w^*\cdot w \cdot (K-n)}{\bar{w}_{n,m}} + \gamma \cdot \frac{c(n,K-n+1)\cdot(n-m)}{c(n,K-n+1)\cdot n + c(K-n+1,n)\cdot(K-n+1)}\; \text{ for } n > m\\[0.5em]
(n,m+1): & \frac{c(n,K-n)\cdot (n-m)}{K}\cdot \frac{m \cdot (1-u)+(1-s)(n-m)u}{\bar{w}_{n,m}}\; \text{ for } n > m\\[0.5em]
(n,m-1): & \frac{c(n,K-n)\cdot m}{K}\cdot \frac{(1-s)(n-m)(1-u)+mu}{\bar{w}_{n,m}} \; \text{ for } m > 0\\[0.5em]
(n-1,m-1): & \frac{c(n,K-n)\cdot m}{K}\cdot \frac{w^*\cdot w \cdot (K-n)}{\bar{w}_{n,m}}+ \gamma \cdot \frac{c(n,K-n+1)\cdot m}{c(n,K-n+1)\cdot n + c(K-n+1,n)\cdot(K-n+1)}\; \text{ for } m > 0\\[0.5em]
(n+1,m+1): & \frac{c(K-n,n)\cdot (K-n)}{K}\cdot \frac{m \cdot (1-u)+(1-s)(n-m)u}{\bar{w}_{n,m}} \text{ for }n < K\\[0.5em]
(n,n-m): & \epsilon \\[0.5em]
\end{cases}\:,
\end{equation}
where $\bar{w}_{n,m}=m+(1-s)(n-m)+w^*\cdot w \cdot (K-n)$ is the total fecundity in the community.
To obtain $w^*$, the average fecundity of the native species under the eco-genetic model in the absence of the introduced species ($n=K$), we computed the stationary distribution $\mathbf{d}=(d(0),\dots,d(K))$ of the birth and death process that describes the number of favored alleles in a native population of size $K$ and has transition rates  
\begin{equation}
m \rightarrow \begin{cases}
m+1: & (K-m) \cdot \frac{m \cdot (1-u)+(1-s)(K-m)u}{m+(1-s)(K-m)} \;\text{ for } m < K\\[0.5em]
m-1: & m \cdot \frac{mu+(1-s)(K-m)(1-u)}{m+(1-s)(K-m)}\; \text{ for } m > 0\\[0.5em] 
K-m: & \epsilon \; \text{ for }0\leq m \leq K\\[0.5em]
\end{cases}
\end{equation}
and averaged
\begin{equation}
w^* = 1-\frac{\sum_{m=0}^K d(m) (K-m) \cdot s}{K} \:.
\label{eq:w*}
\end{equation}

\section{Recursive solution for expectation and variance of the time to extinction in the ecological model}
\label{sec:recursionsolution}
\setcounter{equation}{0}
\renewcommand{\theequation}{C.\arabic{equation}}
The middle equation in the recursion \eqref{eq:recursion_full} can be rewritten as
\begin{equation}
\mu_n \left(\tau_n - \tau_{n-1}\right)=1+\lambda_n \left(\tau_{n+1} - \tau_n \right)\:.
\label{eq:recursion_rewrite}
\end{equation}
Define $z_n:=\tau_n - \tau_{n-1}$, such that
\begin{equation}
z_n = \frac{\lambda_n}{\mu_n}z_{n+1} + \frac{1}{\mu_n}\:.
\label{eq:z_recursion}
\end{equation}
Solving this recursion for $z$ with $z_K = \tau_K - \tau_{K-1} = \frac{1}{\mu_K}$ gives:
\begin{equation}
z_m = \sum_{i=m}^{K}\frac{1}{\mu_i} \prod_{j=m}^{i-1} \frac{\lambda_j}{\mu_j}\;\text{and}
\label{eq:zm}
\end{equation}

\begin{equation}
\tau_m = \sum_{l=1}^m z_l = \sum_{l=1}^m \sum_{i=l}^{K}\frac{1}{\mu_i} \prod_{j=l}^{i-1} \frac{\lambda_j}{\mu_j}\:.
\label{eq:Tm}
\end{equation}
With $m=K$ we obtain the expected persistence time \eqref{eq:T_K_solution}.

Similarly, the middle equation in \eqref{eq:recursion_variance} can be written as
\begin{equation}
\xi_n = \eta_n + \frac{\lambda_n}{\mu_n} \xi_{n+1}\:,
\end{equation}
where $\xi_n = \sigma_n^2-\sigma_{n-1}^2$ and $\eta_n$ is given by \eqref{eq:etas}.
As above, we can solve this recursion for $\xi$ with $\xi_K=\eta_K$:
\begin{equation}
\xi_l = \sum_{i=l}^{K} \eta_i \prod_{j=l}^{i-1} \frac{\lambda_j}{\mu_j}
\end{equation}
 and finally obtain \eqref{eq:variancesolution} by summing
\begin{equation}
\sigma_K^2= \sum_{l=1}^{K} \xi_l\:.
\end{equation}
\end{appendices}

\putbib
\end{bibunit}

\pagebreak

\renewcommand{\appendixpagename}{Supplementary material}
\begin{bibunit}
\begin{appendices}
\renewcommand{\figurename}{Fig.}
\renewcommand{\theequation}{S.\arabic{equation}}
\renewcommand{\thefigure}{S.\arabic{figure}}
\setcounter{equation}{0}
\setcounter{figure}{0}
\setcounter{section}{0}
\section*{Diffusion approximation to the ecological model}
We are seeking an approximation of equation (\ref{eq:T_K_solution}) for large community sizes $K$. The strategy used here is to first rescale the Markov process defined by the transition rates (\ref{eq:lambda_def}) and (\ref{eq:mu_def}) and approximate it by a diffusion process. Then we compute the expected time to native extinction in the diffusion process and finally prove that the exact recursive solution given by equation (\ref{eq:T_K_solution}) converges to the expected time under the diffusion process as $K$ goes to infinity.
\subsection*{Convergence in distribution $X_K \Rightarrow X$}
\begin{theorem}
As the community size $K$ goes to infinity, the rescaled Markov process
\begin{equation}
X_K = \left(X_K(t)\right)_{t\geq0}=\left(\frac{N(K\cdot t)}{K}\right)_{t \geq 0}.
\label{eq:rescaledprocess}
\end{equation}
with $\alpha=1+\frac{\beta}{K}$ and $w=1+\frac{\delta}{K}$, where $\beta$ and $\delta$ are constants, converges in distribution to a diffusion process $X$ with infinitesimal generator
\begin{linenomath}\begin{equation}
Lf(x)=  \frac{1}{2}b(x)\frac{d^2}{dx^2}f(x)+a(x)\frac{d}{dx}f(x)\:,
\label{eq:diffgenerator}
\end{equation}\end{linenomath}
where
\begin{linenomath}\begin{equation}
a(x) = -\beta (1-2x)(1-x)x - \delta x (1-x) - \gamma x
\end{equation}\end{linenomath}
is the infinitesimal mean of the diffusion process and 
\begin{linenomath}\begin{equation}
b(x)= 2x(1-x)
\end{equation}\end{linenomath}
is the infinitesimal variance.
\label{th:weakconvergence}
\end{theorem}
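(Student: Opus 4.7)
The plan is to apply the standard generator-based diffusion approximation theorem for Markov chains (Ethier \& Kurtz, \emph{Markov Processes: Characterization and Convergence}, Theorems 1.6.1 and 4.2.5; a one-dimensional instance is the Karlin--Taylor criterion already cited in the main text). Three ingredients are needed: (i) convergence of the infinitesimal generators $L_K$ of $X_K$ to $L$ on a core of smooth test functions; (ii) well-posedness of the $D([0,\infty),[0,1])$-martingale problem for $L$; and (iii) tightness of $(X_K)$ together with convergence of initial distributions. Tightness is essentially automatic, since the state space $[0,1]$ is compact and the jump sizes vanish like $1/K$, and the initial states $X_K(0)=1$ are deterministic. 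The heart of the argument is step (i), with the care required for (ii) concentrated at the upper boundary $x=1$.

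For the generator computation, $X_K$ jumps by $\pm1/K$ at rates $K\lambda_n$ and $K\mu_n$ (the factor $K$ coming from the time rescaling in \eqref{eq:rescaledprocess}), so Taylor expansion of $f\in C^3([0,1])$ yields, at $x=n/K$,
\begin{equation}
L_K f(x) = (\lambda_n - \mu_n)\, f'(x) + \frac{\lambda_n + \mu_n}{2K}\, f''(x) + O\!\bigl(K^{-1}\|f^{(3)}\|_\infty\bigr).
\end{equation}
I then substitute $\alpha = 1+\beta/K$ and $w=1+\delta/K$ into \eqref{eq:lambda_def}--\eqref{eq:mu_def} and expand the rational denominators $(K-n)w+n$ and $c(n,K-n+1)\,n + c(K-n+1,n)(K-n+1)$ to first order in $1/K$. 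The expectation is that the $O(K)$ leading contribution to each of $\lambda_n$ and $\mu_n$ equals $Kx(1-x)$ and therefore cancels in the difference; the surviving $O(1)$ residuals, namely $\beta x^2(1-x)-\delta x(1-x)^2$ from $\lambda_n$ and $\beta x(1-x)^2 + \delta x^2(1-x) + \gamma x$ from $\mu_n$, should combine to
\begin{equation}
\lambda_n - \mu_n = -\beta(1-2x)(1-x)x - \delta x(1-x) - \gamma x + O(K^{-1}) = a(x) + O(K^{-1}),
\end{equation}
while $(\lambda_n + \mu_n)/K = 2x(1-x) + O(K^{-1}) = b(x) + O(K^{-1})$. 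All coefficients being polynomial in $x$, the error terms are uniform on $[0,1]$, so $L_K f \to Lf$ uniformly and $C^3([0,1])$ serves as a core for $L$.

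The main obstacle will be verifying uniqueness of the limiting martingale problem at the boundaries where $b$ degenerates. At $x=0$ one has $a(0)=b(0)=0$, so $0$ is absorbing for $X$, mirroring $\lambda_0 = 0$ for $X_K$. At $x=1$, however, $b(1)=0$ while $a(1)=-\gamma$, and the boundary type depends on $\gamma$: Feller's scale-and-speed analysis (the scale density behaves like $s'(x)\propto(1-x)^{-\gamma}$ near $x=1$, up to a smooth factor) gives exit/regular behaviour for $0\le\gamma<1$ and entrance behaviour for $\gamma\ge 1$, matching the description in the main text. In either regime the one-dimensional operator has Lipschitz drift on $[0,1]$ and $\tfrac12$-Hölder diffusion coefficient $\sqrt{b(x)}=\sqrt{2x(1-x)}$, so Yamada--Watanabe pathwise uniqueness (or, equivalently, Feller's uniqueness theorem for the associated semigroup on $[0,1]$ with the prescribed boundary classification) applies and the martingale problem is well-posed. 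Combined with the generator convergence and tightness above, the Ethier--Kurtz theorem then yields $X_K \Rightarrow X$ in $D([0,\infty),[0,1])$.
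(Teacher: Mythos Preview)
Your proposal is correct and follows essentially the same route as the paper: generator convergence via Taylor expansion of $f$ and explicit computation of $\lambda_n-\mu_n$ and $\lambda_n+\mu_n$ under the scaling, tightness from compactness of $[0,1]$, uniqueness of the limiting martingale problem via Yamada--Watanabe, and assembly via the Ethier--Kurtz theorem. The only technical detail the paper makes explicit that you leave implicit is that, before invoking Yamada--Watanabe, one extends $a$ and $b$ constantly outside $[0,1]$ so that the SDE is posed on all of $\mathbb{R}$; your additional boundary classification discussion is informative but not needed for the convergence statement itself.
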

To prove the theorem we need a few lemmata.
\begin{lemma}
For $x \in \{0,\frac{1}{K},\frac{2}{K},\dots,1\}$ the generator of the Markov process $X_K$, defined as
\begin{linenomath}\begin{equation}
L_K f(x)  :=\frac{d}{dt} \mathbf{E}\big[f\big(X_K\left(t\right)\big)|X_K(0)=x\big]\big{|}_{t=0}\:,
\end{equation}\end{linenomath}
where $f$ is a bounded twice continuously differentiable function, converges to the generator of the diffusion process given by equation (\ref{eq:diffgenerator}):
\begin{equation}
\lim_{K \rightarrow \infty} \: \max_{x \in \{\frac{0}{K},\frac{1}{K},\dots,\frac{K}{K}\}} \: \left|L_K f(x) -Lf(x)\right|=0\:.
\label{eq:generatorconvergence}
\end{equation}
\label{th:generatorconvergence}
\end{lemma}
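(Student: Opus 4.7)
The plan is to compute $L_K f(x)$ explicitly from the rescaled jump rates, perform a Taylor expansion of $f$, and match the resulting drift and diffusion coefficients to those of $L$ in \eqref{eq:diffgenerator}, verifying uniformity in $x \in \{0, 1/K, \ldots, 1\}$. Since the time rescaling by the factor $K$ in \eqref{eq:rescaledprocess} multiplies all rates by $K$, from state $x = n/K$ the process $X_K$ jumps to $x \pm 1/K$ at rates $K\lambda_n$ and $K\mu_n$, so
\begin{equation*}
L_K f(x) = K\lambda_n\bigl[f(x+1/K) - f(x)\bigr] + K\mu_n\bigl[f(x-1/K) - f(x)\bigr].
\end{equation*}
For $f$ twice continuously differentiable on $[0,1]$, Taylor's theorem with integral remainder, together with the uniform continuity of $f''$, yields
\begin{equation*}
L_K f(x) = (\lambda_n - \mu_n)\, f'(x) + \frac{\lambda_n + \mu_n}{2K}\, f''(x) + R_K(x),
\end{equation*}
where $|R_K(x)| \leq \frac{\lambda_n + \mu_n}{K}\,\omega(1/K)$ and $\omega$ is the modulus of continuity of $f''$ on $[0,1]$.

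I then establish two uniform asymptotics: $\lambda_n - \mu_n = a(x) + O(1/K)$ and $(\lambda_n + \mu_n)/K = b(x) + O(1/K)$. Substituting $\alpha = 1 + \beta/K$ and $w = 1 + \delta/K$ into \eqref{eq:lambda_def} and \eqref{eq:mu_def}, the competition coefficients simplify as $c(K-n,n) = K + \beta x$ and $c(n,K-n) = K + \beta(1-x)$, and the common denominator becomes $(K-n)w + n = K + (1-x)\delta$. Expanding each factor in powers of $1/K$ gives
\begin{equation*}
\lambda_n = Kx(1-x) + \beta x^2(1-x) - \delta x(1-x)^2 + O(1/K),
\end{equation*}
\begin{equation*}
\mu_n = Kx(1-x) + \beta x(1-x)^2 + \delta x^2(1-x) + \gamma x + O(1/K).
\end{equation*}
Subtracting and using $x(1-x)(2x-1) = -(1-2x)(1-x)x$ gives $\lambda_n - \mu_n = a(x) + O(1/K)$; adding gives $(\lambda_n + \mu_n)/K = 2x(1-x) + O(1/K) = b(x) + O(1/K)$. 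Since $\lambda_n + \mu_n = O(K)$ uniformly in $n$, the remainder $R_K$ is bounded by $C\,\omega(1/K)$, which tends to $0$. Combining everything yields $|L_K f(x) - Lf(x)| = O(1/K) + O(\omega(1/K)) \to 0$ uniformly in $x$.

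The main obstacle I anticipate is ensuring uniformity of these expansions across the entire discrete state space, especially at the boundaries $n = 0$ and $n = K$. One must check that the denominators $(K-n)w + n$ and $c(n,K-n+1)n + c(K-n+1,n)(K-n+1)$ are of order $K$ uniformly in $n$ for all sufficiently large $K$; this holds because $\alpha \to 1$ and $w \to 1$. At $n = 0$ both sides vanish identically, since $\lambda_0 = \mu_0 = 0$ and $a(0) = b(0) = 0$. At $n = K$ only the introduction term of $\mu_K$ survives, with $\mu_K \to \gamma$, and $L_K f(1) \to -\gamma f'(1) = a(1) f'(1) = Lf(1)$ because $b(1) = 0$, so the limit is still correct at the upper boundary. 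A minor bookkeeping point is that the introduction term of $\mu_n$ involves $c(n, K-n+1)$ and $c(K-n+1, n)$ rather than $c(n,K-n)$ and $c(K-n,n)$; the extra $+1$ contributes only $O(1/K)$ corrections uniformly in $n$, so the expansions above are unaffected.
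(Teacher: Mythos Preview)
Your proposal is correct and follows essentially the same route as the paper: write $L_K f(x)$ from the jump rates, Taylor-expand $f(x\pm 1/K)$, and compute $\lambda_{Kx}-\mu_{Kx}\to a(x)$ and $(\lambda_{Kx}+\mu_{Kx})/K\to b(x)$ uniformly by substituting $\alpha=1+\beta/K$, $w=1+\delta/K$. Your treatment is in fact slightly more careful than the paper's in two respects: you control the Taylor remainder via the modulus of continuity of $f''$ (appropriate under the stated $C^2$ hypothesis, whereas the paper's $\mathcal{O}(1/K^2)$ bound tacitly assumes more regularity), and you explicitly check the boundary states $n=0,K$ and the $+1$ shift in the introduction term.
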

\begin{proof}
\begin{linenomath}\begin{equation}
L_K f(x) = \left[f\left(x+\frac{1}{K}\right)-f(x)\right]\cdot K \cdot \lambda_{Kx} + \left[f\left(x-\frac{1}{K}\right)-f(x)\right]\cdot K \cdot \mu_{Kx}
\end{equation}\end{linenomath}
\begin{linenomath}\begin{equation}
= \left[f'(x)+ \frac{1}{2}f''(x)\cdot \frac{1}{K} +\mathcal{O}\left(\frac{1}{K^2}\right)\right]\cdot \lambda_{Kx} + \left[-f'(x)+ \frac{1}{2}f''(x)\cdot \frac{1}{K} + \mathcal{O}\left(\frac{1}{K^2} \right)\right]\cdot \mu_{Kx}
\end{equation}\end{linenomath}

\begin{linenomath}\begin{equation}
= f'(x)\cdot \left(\lambda_{Kx} - \mu_{Kx}\right) + \frac{1}{2}f''(x)\cdot \frac{\lambda_{Kx} + \mu_{Kx}}{K}+ \mathcal{O}\left(\frac{1}{K^2} \right)\cdot \left(\lambda_{Kx} + \mu_{Kx}\right)\:.
\end{equation}\end{linenomath}
Substituting the scaled parameters into equations (\ref{eq:lambda_def}) and (\ref{eq:mu_def}) and using $o_i(1)$ to denote terms which fulfill $\lim\limits_{K \to \infty} o_i(1)=0$ uniformly in $x \in [0,1]$, we obtain
\begin{linenomath}\begin{align}
\lambda_{Kx} - \mu_{Kx} = & K \cdot \frac{\left(1+\frac{\beta}{K}x\right)(1-x)x-\left(1+\frac{\beta}{K}(1-x)\right)x(1-x)\left(1+\frac{\delta}{K}\right)}{1+\frac{\delta}{K}(1-x)} \\ \nonumber
 & -\gamma \cdot \frac{\left( 1+\frac{\beta}{K}(1-x)\right)x+o_1(1)}{1+2 x (1-x)\frac{\beta}{K} + o_2(1)}
\end{align}\end{linenomath}

\begin{linenomath}\begin{equation}
= \frac{-\beta (1-2x)(1-x) x - \delta(1-x)x +o_3(1)}{1+o_4(1)} - \gamma \cdot \frac{x+o_5(1)}{1+o_6(1)}=\mathcal{O}(1)\:,
\label{eq:lambdaKx-muKx}
\end{equation}\end{linenomath}
and
\begin{linenomath}\begin{equation}
\lambda_{Kx} + \mu_{Kx}=K \cdot \frac{2(1-x)x +o_7(1)}{1+o_4(1)} + \gamma \cdot \frac{x+o_5(1)}{1+o_6(1)} = \mathcal{O}(K)\:.
\label{eq:lambdaKx+muKx}
\end{equation}\end{linenomath}

Thus 
\begin{linenomath}\begin{equation}
L_K f(x) = x(1-x)\frac{d^2}{dx^2}f(x) + \big(-\beta (1-2x)(1-x)x - \gamma x - \delta x (1-x)\big)\frac{d}{dx}f(x) + o_8(1)\:.
\end{equation}\end{linenomath}
All expressions are bounded uniformly in $\{\frac{0}{K},\frac{1}{K},\dots,\frac{K}{K}\}$ and the error in equation (\ref{eq:generatorconvergence}) thus converges to zero uniformly in $\{\frac{0}{K},\frac{1}{K},\dots,\frac{K}{K}\}$ as $K$ goes to infinity.
\end{proof}

\begin{lemma}
The sequence of Markov processes $\left(X_K\right)_{K \in \mathbb{N}}$ is tight.
\label{th:tightness}
\end{lemma}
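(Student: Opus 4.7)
The plan is to verify Aldous' tightness criterion in the Skorokhod space $D([0,\infty),[0,1])$. Since every $X_K$ takes values in the compact interval $[0,1]$, the compact-containment condition is automatic, and it only remains to control the modulus of continuity of the paths.

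First I would write down the Doob-Meyer decomposition of the time-rescaled pure-jump Markov process $X_K$,
\[
X_K(t) = X_K(0) + M_K(t) + \int_0^t \bigl(\lambda_{KX_K(s)} - \mu_{KX_K(s)}\bigr)\,ds,
\]
in which $M_K$ is a square-integrable martingale. Because jumps of $X_K$ occur at total rate $K\bigl(\lambda_{KX_K(s)} + \mu_{KX_K(s)}\bigr)$ under the sped-up time scale and every jump has magnitude $1/K$, the predictable quadratic variation is
\[
\langle M_K \rangle_t = \frac{1}{K}\int_0^t \bigl(\lambda_{KX_K(s)} + \mu_{KX_K(s)}\bigr)\,ds.
\]
Equations (\ref{eq:lambdaKx-muKx}) and (\ref{eq:lambdaKx+muKx}) from the proof of Lemma \ref{th:generatorconvergence} supply a constant $C$, independent of $K$ and of $x\in[0,1]$, such that $|\lambda_{Kx}-\mu_{Kx}|\le C$ and $(\lambda_{Kx}+\mu_{Kx})/K\le C$. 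Hence the drift term has total variation at most $C\delta$ on any time interval of length $\delta$, and $\mathbf{E}[\langle M_K\rangle_\tau]\le C\tau$ for every bounded stopping time $\tau$.

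Next I would carry out the Aldous check. Fix $T>0$ and $\varepsilon>0$, and let $\tau_1\le\tau_2\le\tau_1+\delta$ be stopping times with $\tau_2\le T$. Optional stopping applied to $M_K^2-\langle M_K\rangle$ gives
\[
\mathbf{E}\bigl[(M_K(\tau_2)-M_K(\tau_1))^2\bigr] = \mathbf{E}\bigl[\langle M_K\rangle_{\tau_2}-\langle M_K\rangle_{\tau_1}\bigr] \le C\delta,
\]
while the drift increment is bounded pathwise by $C\delta$. Splitting $X_K(\tau_2)-X_K(\tau_1)$ into its martingale and drift pieces and using Chebyshev's inequality yields, for $\delta<\varepsilon/(2C)$,
\[
\mathbf{P}\bigl(|X_K(\tau_2)-X_K(\tau_1)|>\varepsilon\bigr) \le \mathbf{P}\bigl(|M_K(\tau_2)-M_K(\tau_1)|>\varepsilon/2\bigr) \le \frac{4C\delta}{\varepsilon^2},
\]
which tends to $0$ as $\delta\to 0$, uniformly in $K$. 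Combined with the automatic compact containment, this establishes tightness of $(X_K)_{K\in\mathbb{N}}$ in $D([0,\infty),[0,1])$.

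The main technical hurdle is simply to set up the martingale decomposition and the quadratic-variation identity cleanly for the time-rescaled birth-death process; once that is in place, the moment bounds follow immediately from the uniform rate estimates already recorded in Lemma \ref{th:generatorconvergence}. A less computational alternative would be to appeal to a standard relative-compactness theorem (e.g. Ethier and Kurtz, Theorem~4.8.2): the uniform generator convergence established in Lemma \ref{th:generatorconvergence}, together with compact containment, implies tightness directly on any core of test functions and bypasses the explicit Aldous calculation altogether.
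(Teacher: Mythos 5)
Your argument is correct and follows essentially the same route as the paper: the paper's proof consists of invoking compactness of the state space together with the basic tightness criterion and the Aldous condition from Joffe and M\'etivier, which is precisely the semimartingale decomposition, uniform drift and quadratic-variation bounds, and Aldous check that you carry out explicitly. Your version simply makes the cited criterion's hypotheses concrete using the uniform estimates $\lambda_{Kx}-\mu_{Kx}=\mathcal{O}(1)$ and $(\lambda_{Kx}+\mu_{Kx})/K=\mathcal{O}(1)$ already recorded in the proof of Lemma \ref{th:generatorconvergence}.
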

\begin{proof}
This is proven by using the compactness of the state space and the basic criterion for tightness and the Aldous condition from p. 34-35 in \citet{3351Joffe1986}.
\end{proof}

\begin{lemma}
The martingale problem for the generator $L$ has at most one solution.
\label{th:uniqueness}
\end{lemma}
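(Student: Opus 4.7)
The plan is to prove uniqueness of the martingale problem by reducing it to pathwise uniqueness for the associated stochastic differential equation $dX_t = a(X_t)\,dt + \sqrt{b(X_t)}\,dB_t$ and treating the two degenerate boundary points separately. I would work throughout with the canonical domain $C^2([0,1])$ for the generator $L$, on which $Lf$ is a polynomial whenever $f$ is.

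First I would establish pathwise uniqueness on the open interior $(0,1)$ via the Yamada--Watanabe criterion. The drift $a(x)$ is polynomial and hence Lipschitz on every compact subset of $[0,1]$, while the diffusion coefficient $\sqrt{b(x)} = \sqrt{2x(1-x)}$ is H\"older continuous of order $1/2$, which is exactly the regularity Yamada--Watanabe requires in one dimension. Stopping solutions at the exit time $\tau_\varepsilon$ from $[\varepsilon, 1-\varepsilon]$ and letting $\varepsilon \downarrow 0$ then yields pathwise uniqueness up to the first time the process hits $\{0,1\}$.

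Next I would classify the behavior at each endpoint using Feller's scale function and speed measure. At $x=1$ we have $b(1)=0$ and $a(1) = -\gamma \le 0$, so the drift pushes strictly into the interior; the standard Feller integrals show that $1$ is an entrance (or natural) boundary, so no extra boundary condition is needed to pin down the law. At $x=0$ both drift and diffusion vanish, and the martingale problem automatically makes that point absorbing: for every $f \in C^2([0,1])$ one checks $Lf(0)=0$, so once a solution reaches $0$ it must stay there. This matches the biological interpretation of $0$ as the native extinction state.

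The main obstacle is gluing the interior pathwise uniqueness and the boundary classifications into uniqueness of finite-dimensional distributions on $D_{[0,1]}[0,\infty)$. I would handle this by a localization argument: given two solutions $X$ and $Y$ with the same initial distribution, pathwise uniqueness up to the hitting time of $\{0,1\}$ forces their stopped laws to coincide, and the absorbing-at-$0$, entrance-at-$1$ classifications extend this agreement to all of $[0,\infty)$. Invoking the standard conversion of SDE pathwise uniqueness into martingale-problem uniqueness (e.g.\ Theorem~4.4.1 of Ethier and Kurtz, 1986) then completes the proof.
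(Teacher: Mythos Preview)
Your approach shares the central ingredient with the paper's proof---the Yamada--Watanabe criterion applied to the SDE $dX_t=a(X_t)\,dt+\sqrt{b(X_t)}\,dB_t$---but the paper avoids your boundary analysis entirely through a simple device: it extends the coefficients to all of $\mathbb{R}$ by setting $a(x)=a(0)$, $b(x)=b(0)$ for $x<0$ and $a(x)=a(1)$, $b(x)=b(1)$ for $x>1$. The extended $a$ is globally Lipschitz and the extended $\sqrt{b}$ is globally H\"older-$1/2$, so Yamada--Watanabe applies on $\mathbb{R}$ directly; pathwise uniqueness then yields uniqueness in law, which is equivalent to uniqueness for the martingale problem. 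Since $b$ vanishes at both endpoints and the drift there does not point outward, solutions started in $[0,1]$ remain in $[0,1]$, so the extended problem coincides with the original one. No localisation, no Feller classification, no gluing.

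Your route is defensible in principle but more fragile at exactly the step you flag as the ``main obstacle.'' The assertion that $x=1$ is an entrance (or natural) boundary does not hold for all parameter values: the paper itself notes that $x=1$ becomes an entrance boundary only at $\gamma=1$, and indeed the scale density $s(x)\sim(1-x)^{-\gamma}$ is integrable near $1$ whenever $\gamma<1$, so $1$ is then accessible from the interior. In that regime you cannot dismiss the endpoint with ``no extra boundary condition needed,'' and your gluing argument would have to say precisely what behaviour the choice of domain $C^2([0,1])$ forces at a regular boundary---a genuinely delicate point. The paper's extension-to-$\mathbb{R}$ trick bypasses all of this in one line, which is what it buys over your decomposition.
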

\begin{proof}
The problem can be written as a stochastic differential equation with infinitesimal parameters extended beyond the interval $[0,1]$ with $a(x)=a(0)$ and $b(x)=b(0)$ for $x<0$ and $a(x)=a(1)$ and $b(x)=b(1)$ for $x>1$ such that the process is now defined on $\mathbb{R}$ while leaving the behavior inside the interval $[0,1]$ unchanged. Then we can apply the Yamada-Watanabe theorem (Theorem 26.10 in \citealt{3349Klenke2006}) and conclude that the stochastic differential equation has a unique strong solution. This implies uniqueness in law (Theorem 26.18 in \citealt{3349Klenke2006}), which in turn is equivalent to the statement that there exists at most one solution to the corresponding martingale problem \citep[][p. 159]{2250Rogers2000}.
\end{proof}

\begin{proof}[Proof of Theorem \ref{th:weakconvergence}]
The convergence in distribution follows from lemmata \ref{th:generatorconvergence}, \ref{th:tightness}, and \ref{th:uniqueness} and Theorem 4.8.10 in \citet{974Ethier2005}. 
\end{proof}

\subsection*{Expected time to extinction under the diffusion process}
Now we compute the expected time to the extinction of the native species under the diffusion process. Define
\begin{equation}
\sigma_l(X):= \inf\{t\geq 0 : X(t) \leq \l\}
\end{equation}
and let $\mathbf{E}_x$ denote the expectation for a process starting in $x$.

\begin{theorem}
\begin{equation}
\mathbf{E}_x[\sigma_0(X)]=\int_0^x e^{\beta \xi (1-\xi)-\gamma \ln(1-\xi)+\delta \xi} \int_{\xi}^1\frac{e^{-\beta \eta (1-\eta)+\gamma \ln(1-\eta)-\delta \eta}}{\eta (1-\eta)}d\eta \: d\xi\:.
\label{eq:g(1)}
\end{equation}
for $x \in [0,1]$.
\label{th:diffusiontime}
\end{theorem}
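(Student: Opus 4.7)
The plan is to observe that, by the standard theory invoked immediately before the statement, $g(x) := \mathbf{E}_x[\sigma_0(X)]$ is the unique admissible solution of the boundary value problem $Lg(x) = -1$ on $(0,1)$ with $g(0) = 0$ and $|\lim_{x \nearrow 1} g'(x)| < \infty$. The task therefore reduces to solving this linear second-order ODE explicitly and matching the result with the stated double integral.

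Since $b(x)/2 = x(1-x)$, I would first divide $x(1-x) g''(x) + a(x) g'(x) = -1$ by $x(1-x)$ and set $h := g'$, which yields the first-order linear ODE
\begin{equation*}
h'(x) + p(x)\, h(x) = -\frac{1}{x(1-x)}, \qquad p(x) = -\beta(1-2x) - \frac{\gamma}{1-x} - \delta.
\end{equation*}
An antiderivative of $p$ is $-\beta x(1-x) + \gamma \ln(1-x) - \delta x$, so the integrating factor is $\mu(x) := (1-x)^{\gamma}\exp(-\beta x(1-x) - \delta x)$. Multiplying through and integrating yields the general solution
\begin{equation*}
h(x) = \frac{1}{\mu(x)}\left[\,C - \int_{0}^{x} \frac{\mu(\eta)}{\eta(1-\eta)}\, d\eta\,\right].
\end{equation*}

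To fix $C$ I would apply the boundary condition at $1$. Since $\mu(\eta)/(\eta(1-\eta)) \sim e^{-\delta}(1-\eta)^{\gamma - 1}$ as $\eta \nearrow 1$ (integrable for $\gamma > 0$) while $\mu(x) \to 0$ like $(1-x)^{\gamma}$, boundedness of $h$ at $1$ forces the bracket to vanish there, i.e.\ $C = \int_0^1 \mu(\eta)/(\eta(1-\eta))\, d\eta$, and therefore
\begin{equation*}
g'(x) = \frac{1}{\mu(x)}\int_{x}^{1} \frac{\mu(\eta)}{\eta(1-\eta)}\, d\eta.
\end{equation*}
Integrating once more from $0$ to $x$ and using $g(0) = 0$, together with the identities $1/\mu(\xi) = \exp(\beta \xi(1-\xi) - \gamma \ln(1-\xi) + \delta \xi)$ and $\mu(\eta) = \exp(-\beta \eta(1-\eta) + \gamma \ln(1-\eta) - \delta \eta)$, reproduces the double-integral formula in the statement.

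The main technical point I expect to spell out is the uniqueness of the admissible constant $C$: the homogeneous solution $x \mapsto 1/\mu(x)$ behaves like $(1-x)^{-\gamma}$ near $1$, so adding any nonzero multiple of it would violate the boundedness of $g'$ near $1$, and this is where the positivity of $\gamma$ (which makes $1$ an entrance-type boundary for $X$, as noted in the main text) is essential for the argument to pick out a unique solution out of the two-parameter family of solutions to the inhomogeneous equation.
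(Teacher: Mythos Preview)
Your approach is correct and essentially the same as the paper's: your integrating factor $\mu(x)$ is precisely $1/s(x)$, where $s$ is the scale density the paper introduces, so the reduction to a first-order ODE and its solution proceed identically. Two points are worth flagging. First, your choice of $0$ as the lower limit in $\int_0^x \mu(\eta)/(\eta(1-\eta))\,d\eta$ is improper, since $\mu(\eta)/(\eta(1-\eta)) \sim 1/\eta$ near $0$ and the integral diverges; the paper uses $0.5$ as the reference point, and only after fixing the constant does the convergent combination $\int_x^1$ emerge. Your final expression for $g'$ is correct, but the intermediate step needs an interior base point. Second, the paper does not merely cite Karlin to identify the ODE solution with $\mathbf{E}_x[\sigma_0(X)]$: after computing $g$ it verifies that $g$ is bounded on $[0,1]$, observes that $g(X(t))+t$ is therefore a martingale, and applies optional stopping together with dominated and monotone convergence to conclude $g(x)=\mathbf{E}_x[\sigma_0(X)]$. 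You treat this identification as already established; depending on how much you are permitted to import from the reference, you may want to include this verification step.
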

\begin{proof}
An important tool for the analysis of diffusion processes is the scale function $S(x)$. The scale function transforms the state space such that the diffusion process becomes a martingale. Thus the scale function fulfills the differential equation $LS(x) = 0$ \cite[][p.\ 196]{3238Karlin1981}. With $s(x):=\frac{d}{dx}S(x)$ the differential equation becomes
\begin{equation}
\frac{1}{2}b(x)\frac{d}{dx}s(x)+a(x)s(x) = 0 \:.
\end{equation}
Consequently,
\begin{equation}
s(x) = e^{\int_0^x -\frac{2 a(y)}{b(y)} dy}=e^{\int_0^x -\left[ -\beta (1-2y) - \frac{\gamma}{1-y} - \delta \right] dy}=e^{\beta x (1-x) - \gamma \ln(1-x) + \delta x}\:,
\end{equation}
where the lower limit of the integral can be chosen arbitrarily and is here 0 for convenience \cite[][p.\ 194]{3238Karlin1981}.
With this we can write the infinitesimal generator as \citep[][p.\ 195]{3238Karlin1981}:
\begin{equation}
Lf(x)=\frac{1}{2}s(x)b(x)\frac{d}{dx}\left[\frac{\frac{d}{dx}f(x)}{s(x)}\right]\:.
\label{eq:operator_modern}
\end{equation}

Let $g(x)$ be the solution of the differential equation \citep[][p. 193]{3238Karlin1981}
\begin{equation}
Lg(x)=-1
\label{eq:differentialequation2}
\end{equation}
with boundary conditions $g(0)=0$ and $|\lim\limits_{x \nearrow 1} g'(x)|<\infty$.
Using the differential operator from (\ref{eq:operator_modern}), equation (\ref{eq:differentialequation2}) is equivalent to
\begin{equation}
g(x)=\int_0^x e^{\beta \xi (1-\xi)-\gamma \ln(1-\xi)+\delta \xi} \left(C_1 - \int_{0.5}^\xi \frac{e^{-\beta \eta (1-\eta)+\gamma \ln(1-\eta)-\delta \eta}}{\eta (1-\eta)}d\eta \right) d\xi + C_2 \:.
\label{eq:general_solution}
\end{equation}
The boundary conditions are fulfilled with $C_2=0$ and  
\begin{equation}
C_1 = \int_{0.5}^1\frac{e^{-\beta \eta (1-\eta)+\gamma \ln(1-\eta)-\delta \eta}}{\eta (1-\eta)}d\eta \:.
\end{equation}
With this
\begin{equation}
g(x)=\int_0^x e^{\beta \xi (1-\xi)-\gamma \ln(1-\xi)+\delta \xi} \int_{\xi}^1\frac{e^{-\beta \eta (1-\eta)+\gamma \ln(1-\eta)-\delta \eta}}{\eta (1-\eta)}d\eta \: d\xi\:.
\end{equation}
Since $g(x)\leq g(1)<\infty$ for all $x \in [0,1]$, $g$ is a bounded, twice continuously differentiable function. Then, since $(X(t))_{t\geq0}$ solves the martingale problem for the generator $L$
\begin{equation}
g(X(t))-\int_0^t Lg(X(s)) ds = g(X(t)) + t
\end{equation}
with $t \in [0,\infty)$ is a martingale. Application of the optional stopping theorem \citep[see][p. 421, for a similar application]{974Ethier2005} with $\tau>0$ gives
\begin{equation}
\mathbf{E}_x\left[g\left(X(\sigma_0(X) \wedge \tau)\right)\right]+\mathbf{E}_x\left[\sigma_0(X) \wedge \tau \right]=g(x)\:.
\end{equation}
Letting $\tau$ go to infinity, we obtain by the dominated and monotone convergence theorems
\begin{equation}
g(x)=\mathbf{E}_x\left[\sigma_0(X) \right]\:.
\end{equation}
\end{proof}

\subsection*{Convergence of extinction times}
Because the first time to reach the boundary 0 is not a continuous functional of the process, we cannot conclude from the weak convergence of the rescaled birth and death process to the diffusion process that also the expected time to extinction converges to that under the diffusion process. Some more work is required to prove 
\begin{theorem}
\begin{equation}
\lim_{K \to \infty} \mathbf{E}_1[\sigma_0(X_K)] = \mathbf{E}_1[\sigma_0(X)]\:.
\end{equation}
\label{th:timeconvergence}
\end{theorem}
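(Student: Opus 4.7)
The plan is to handle the discontinuity of $\sigma_0$ on Skorokhod path space by a cutoff at level $\epsilon>0$, reducing to a continuous functional plus a boundary correction. By the strong Markov property,
\begin{equation}
\mathbf{E}_1[\sigma_0(X_K)] \;=\; \mathbf{E}_1[\sigma_\epsilon(X_K)] + \mathbf{E}_1\!\left[\mathbf{E}_{X_K(\sigma_\epsilon)}[\sigma_0(X_K)]\right]\:,
\end{equation}
and analogously $\mathbf{E}_1[\sigma_0(X)] = \mathbf{E}_1[\sigma_\epsilon(X)] + g(\epsilon)$ with $g$ as in Theorem \ref{th:diffusiontime}. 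I would show that, for each fixed $\epsilon$, the first summand on the right converges as $K\to\infty$ to its counterpart for $X$, and that the second summand tends to $0$ as $\epsilon\to 0$ uniformly in $K$.

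For the cutoff convergence, the limit diffusion has nondegenerate variance $b(x)=2x(1-x)>0$ on $(0,1)$, so almost every path of $X$ crosses the level $\epsilon$ transversally and $\sigma_\epsilon$ is continuous at a.e.\ path of $X$ in the Skorokhod topology. Combined with Theorem \ref{th:weakconvergence} and the continuous mapping theorem, this yields $\sigma_\epsilon(X_K) \Rightarrow \sigma_\epsilon(X)$. To upgrade distributional convergence to convergence of means, I would establish uniform integrability $\sup_K \mathbf{E}_1[\sigma_\epsilon(X_K)^{1+\kappa}]<\infty$ for some $\kappa>0$; this can be obtained via a Lyapunov argument exploiting the introduction term $-\gamma x$, which contributes a uniformly negative component to the drift on $[\epsilon,1]$ and, via Lemma \ref{th:generatorconvergence}, transfers to $L_K$ for large $K$. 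Alternatively, one can bound the variance formula (\ref{eq:variancesolution}) adapted to the process stopped at level $\epsilon$.

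The main obstacle is the boundary control,
\begin{equation}
\lim_{\epsilon\to 0}\: \sup_K \: \mathbf{E}_{\lfloor \epsilon K\rfloor}[\sigma_0(X_K)] \;=\; 0\:,
\end{equation}
which cannot be deduced from weak convergence alone. For the diffusion the analogous $g(\epsilon)\to 0$ is immediate from the integral representation in Theorem \ref{th:diffusiontime} together with $g(0)=0$. For the discrete processes I would work directly with the closed form $\tau_m/K = \sum_{l=1}^m z_l/K$ from equations (\ref{eq:zm}) and (\ref{eq:Tm}), substitute $\alpha = 1+\beta/K$ and $w = 1+\delta/K$, and use Taylor/Stirling estimates of $\prod_{j=l}^{i-1}\lambda_j/\mu_j$ and $1/\mu_i$ to identify $\tau_{\lfloor\epsilon K\rfloor}/K$ as a Riemann sum converging to $g(\epsilon)$ with error $o(1)$ as $K\to\infty$. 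The delicate point is that $\lambda_n/\mu_n$ need not be bounded away from $1$ near $n=0$ (it is even supercritical when $w<1$), so the small-$\epsilon$ estimate must exploit the cumulative effect of the introduction rate $\gamma$, which enters the integrand of (\ref{eq:T_Kdiff}) through the factor $(1-\eta)^{\gamma-1}/\eta$. Assembling the pieces, for every $\epsilon>0$,
\begin{equation}
\limsup_{K\to\infty} \bigl|\mathbf{E}_1[\sigma_0(X_K)] - \mathbf{E}_1[\sigma_0(X)]\bigr| \;\leq\; \sup_K \mathbf{E}_{\lfloor\epsilon K\rfloor}[\sigma_0(X_K)] + g(\epsilon)\:,
\end{equation}
and sending $\epsilon\to 0$ concludes.
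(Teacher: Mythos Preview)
Your approach is essentially the paper's: decompose via a cutoff at level $\epsilon>0$, use weak convergence plus continuous mapping for $\sigma_\epsilon$, control the boundary remainder $\mathbf{E}_{\lfloor\epsilon K\rfloor}[\sigma_0(X_K)]$ by direct estimates on the closed form $\tau_m/K$ from (\ref{eq:zm})--(\ref{eq:Tm}) (this is exactly the paper's Lemma~\ref{th:T_K/K_convergence}, proved via the product bound (\ref{eq:productbound}) obtained from Lemma~\ref{th:quotient}), and send $\epsilon\to 0$. You are in fact more careful than the paper on one point: the paper passes from $\sigma_l(X_K)\Rightarrow\sigma_l(X)$ directly to convergence of expectations without explicitly establishing uniform integrability (delegating rigor to an external citation), whereas your Lyapunov proposal exploiting the uniformly negative drift contribution $-\gamma x$ on $[\epsilon,1]$ is the natural way to close that gap.
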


We start with proving the following lemmata:

\begin{lemma}
For all $j \in \{1,\dots,K-1\}$
\begin{equation}
\frac{\mu_j}{\lambda_j}=\frac{1+\frac{\beta}{K}\left(1-\frac{j}{K}\right)}{1+\frac{\beta}{K}\frac{j}{K}} \cdot \left[1+\frac{\delta}{K}+\frac{\gamma}{K-j}+\mathcal{O}\left(\frac{1}{K}\right)\right]\:.
\label{eq:mu/lambda}
\end{equation}
\label{th:quotient}
\end{lemma}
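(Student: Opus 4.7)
The plan is to split $\mu_j$ according to the two kinds of events contributing to it in \eqref{eq:mu_def}: let $\mu_j^{(1)}$ denote the death--birth term and $\mu_j^{(2)}$ the introduction term. I will compute $\mu_j^{(1)}/\lambda_j$ exactly and $\mu_j^{(2)}/\lambda_j$ up to a uniform relative error of order $1/K$, then recombine and factor out the prefactor appearing in the lemma.

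For the first ratio, the key observation is that $\lambda_j$ and $\mu_j^{(1)}$ share the factor $j(K-j)/K$ and carry the identical denominator $(K-j)w+j$, all of which cancels to leave
\begin{equation*}
\frac{\mu_j^{(1)}}{\lambda_j} = w\cdot\frac{c(j,K-j)}{c(K-j,j)}.
\end{equation*}
Substituting $\alpha=1+\beta/K$ then yields $c(j,K-j) = K[1+\tfrac{\beta}{K}(1-\tfrac{j}{K})]$ and $c(K-j,j) = K[1+\tfrac{\beta}{K}\tfrac{j}{K}]$, so this piece equals the prefactor $P(j)$ of the lemma times $w = 1+\delta/K$, with no approximation involved.

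For the second ratio I would expand the quadratic numerator and denominator of the introduction term. A direct computation gives $c(j,K-j+1)\,j + c(K-j+1,j)(K-j+1) = (K+1)^2 + 2\beta j(K-j+1)/K$ and $c(j,K-j+1)\,j = j(K+1) + \beta j(K-j+1)/K$. Because $(K-j+1)/(K+1)\leq 1$, the $\beta$-corrections each contribute only a factor $1+O(1/K)$ uniformly in $j$, so $\mu_j^{(2)} = (\gamma j/(K+1))\cdot[1+O(1/K)]$. Combined with the leading-order expansion $\lambda_j = (j(K-j)/K)\cdot[1+O(1/K)]$ falling out of the previous step, this produces
\begin{equation*}
\frac{\mu_j^{(2)}}{\lambda_j} = \frac{\gamma}{K-j}\cdot[1+O(1/K)]
\end{equation*}
with the error uniform in $j\in\{1,\ldots,K-1\}$.

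Finally I would factor $P(j)$ out of the sum $\mu_j^{(1)}/\lambda_j + \mu_j^{(2)}/\lambda_j$. Since $P(j) = 1+O(1/K)$ uniformly for fixed $\beta$ and large $K$, so is $1/P(j)$, and the bracketed factor becomes $1+\delta/K + \gamma/(K-j) + O(1/K)$, which is the claimed form. The one step needing genuine care, and really the only nontrivial part of the argument, is keeping the final remainder uniform in $j$ all the way up to $j=K-1$, where $\gamma/(K-j)$ is itself of order one: the multiplicative error $O(1/K)$ on $\gamma/(K-j)$ then contributes an additive error of order $\gamma/[K(K-j)]$, which remains $O(1/K)$ thanks to the trivial bound $K-j\geq 1$.
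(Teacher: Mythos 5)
Your proposal is correct and follows essentially the same route as the paper's proof: split $\mu_j$ into the death--birth term (whose ratio to $\lambda_j$ is exactly $w\,c(j,K-j)/c(K-j,j)$) and the introduction term (whose ratio is $\tfrac{\gamma}{K-j}$ up to a uniform $1+\mathcal{O}(1/K)$ factor), then factor out the common prefactor. Your explicit remark that the multiplicative error on $\gamma/(K-j)$ stays additively $\mathcal{O}(1/K)$ because $K-j\geq 1$ is the one point the paper leaves implicit, and you handle it correctly.
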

\begin{proof}
Substituting the scaled parameters given in theorem \ref{th:weakconvergence} into equations (\ref{eq:lambda_def}) and (\ref{eq:mu_def}), we find that
\begin{equation}
\lambda_j = \frac{\left(K+\frac{\beta}{K}\cdot j\right)\cdot (K-j)\cdot j}{K \cdot \left[K+\frac{\delta}{K}\cdot (K-j)\right]}
\label{eq:lambdajrescaled}
\end{equation}
and
\begin{align}
\label{eq:mujrescaled}
\mu_j = & \frac{\left[K+\frac{\beta}{K}\cdot(K-j)\right]\cdot j \cdot (K-j)\left(1+\frac{\delta}{K}\right)}{K \cdot \left[K+\frac{\delta}{K}\cdot (K-j)\right]} \\ \nonumber
 & + \gamma \cdot \frac{\left[K+1+\frac{\beta}{K}\cdot(K-j+1)\right]\cdot j}{\left[K+1+\frac{\beta}{K}\cdot(K-j+1)\right]\cdot j + \left[K+1+\frac{\beta}{K}\cdot j\right]\cdot(K-j+1)}\:.
\end{align}
From equations \ref{eq:lambdajrescaled} and \ref{eq:mujrescaled} it follows that
\begin{equation}
\frac{\mu_j}{\lambda_j}=\frac{\left[1+\frac{\beta}{K}\left(1-\frac{j}{K}\right)\right]\left(1+\frac{\delta}{K}\right)}{1+\frac{\beta}{K}\frac{j}{K}} + \frac{\gamma}{K-j} \frac{\left[1+\frac{1}{K}+\frac{\beta}{K}\left(1-\frac{j}{K}+\frac{1}{K}\right)\right] \left[1+\frac{\delta}{K} \left(1-\frac{j}{K}\right)\right]}{\left[\left(1+\frac{1}{K}\right)^2+2 \frac{\beta}{K}\left(1-\frac{j}{K}+\frac{1}{K}\right)\frac{j}{K}\right] \left(1+\frac{\beta}{K} \frac{j}{K}\right)}
\end{equation}
\begin{equation}
=\frac{\left[1+\frac{\beta}{K}\left(1-\frac{j}{K}\right)\right]\left(1+\frac{\delta}{K}\right)}{1+\frac{\beta}{K}\frac{j}{K}} + \frac{\gamma}{K-j} \frac{1+\frac{\beta}{K}\left(1-\frac{j}{K}\right)}{1+\frac{\beta}{K}\frac{j}{K}} \cdot \left(1+\mathcal{O}\left(\frac{1}{K}\right)\right)\:.
\end{equation}
After factorization one obtains (\ref{eq:mu/lambda}).
\end{proof}

\begin{lemma}
\begin{equation}
\prod_{j=l}^{i-1} \frac{\lambda_j}{\mu_j} < e^{2|\beta|+|\delta|} \cdot \frac{\left(1-\frac{i}{K}\right)^{\gamma}}{\left(1-\frac{l}{K}\right)^{\gamma}} \cdot e^{\mathcal{O}(1)} < e^{2|\beta|+|\delta|}\cdot e^{\mathcal{O}(1)}=:E
\label{eq:productbound}
\end{equation}
for all $l,i \in \{1,\dots,K\}$ with $l \leq i$.
\end{lemma}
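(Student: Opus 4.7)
The plan is to take logarithms and bound $\log\prod_{j=l}^{i-1}(\lambda_j/\mu_j) = -\sum_{j=l}^{i-1}\log(\mu_j/\lambda_j)$, using the explicit decomposition of $\mu_j/\lambda_j$ given by the preceding lemma, equation (\ref{eq:mu/lambda}). That formula splits each ratio into a ``$\beta$-ratio'' times $[1+\delta/K+\gamma/(K-j)+\mathcal{O}(1/K)]$, and I will bound the contribution of each factor to the log-sum separately.

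For the $\beta$-ratio, applying the elementary estimate $|\log(1+x)|\leq|x|+x^{2}$ for $|x|\leq 1/2$ to each of the two logarithms and using that $|\beta(1-j/K)/K|\leq|\beta|/K$ and $|\beta j/K^{2}|\leq|\beta|/K$, I obtain that each of $\bigl|\sum_{j=l}^{i-1}\log(1+\beta(1-j/K)/K)\bigr|$ and $\bigl|\sum_{j=l}^{i-1}\log(1+\beta j/K^{2})\bigr|$ is at most $|\beta|(i-l)/K+\mathcal{O}(1)\leq|\beta|+\mathcal{O}(1)$. Combining the two logs gives a contribution of at most $2|\beta|+\mathcal{O}(1)$ to the log-product, which accounts for the $e^{2|\beta|}$ factor.

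For the remaining factor I rewrite
\[
1+\tfrac{\delta}{K}+\tfrac{\gamma}{K-j}+\mathcal{O}(1/K)=\bigl(1+\tfrac{\gamma}{K-j}\bigr)\,\bigl(1+\tfrac{\delta/K+\mathcal{O}(1/K)}{1+\gamma/(K-j)}\bigr).
\]
Since $1+\gamma/(K-j)\geq 1$, the inner correction is $\mathcal{O}(1/K)$ uniformly in $j$, and a careful bookkeeping of the $\delta/K$ term shows that the log-sum of this perturbation is at most $|\delta|+\mathcal{O}(1)$ in absolute value, yielding the $e^{|\delta|}$ factor. For the main $\gamma$-factor I use the telescoping identity
\[
\prod_{j=l}^{i-1}\bigl(1+\tfrac{\gamma}{K-j}\bigr)=\prod_{k=K-i+1}^{K-l}\frac{k+\gamma}{k}=\frac{\Gamma(K-l+1+\gamma)\,\Gamma(K-i+1)}{\Gamma(K-l+1)\,\Gamma(K-i+1+\gamma)},
\]
together with the uniform two-sided bound $\Gamma(x+\gamma)/[\Gamma(x)\,x^{\gamma}]\in[c_{1},c_{2}]$ for $x\geq 1$ (which follows from continuity together with the standard asymptotic $\Gamma(x+\gamma)/\Gamma(x)\sim x^{\gamma}$), to conclude that this product equals $[(K-l+1)/(K-i+1)]^{\gamma}\cdot e^{\mathcal{O}(1)}$. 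Absorbing the $1/K$ shift that relates $(K-l+1)/K$ to $1-l/K$ (and similarly for $i$) into the $e^{\mathcal{O}(1)}$ factor then gives the desired $\gamma$-dependent ratio.

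Exponentiating and combining these estimates yields the first inequality of (\ref{eq:productbound}); the second inequality is immediate from $(1-i/K)^{\gamma}\leq(1-l/K)^{\gamma}$, valid because $l\leq i$ and $\gamma\geq 0$. The most delicate step is the $\gamma$-piece when $j$ approaches $K$: there $\gamma/(K-j)$ is of order one, so the usual Taylor expansion of $\log(1+x)$ breaks down and the argument must go through the exact $\Gamma$-function ratio, relying on the uniformity of the asymptotic $\Gamma(x+\gamma)/\Gamma(x)\sim x^{\gamma}$ down to $x=1$. Strictly, the stated form of the bound also forces the implicit restriction $i\leq K-1$, since otherwise $(1-i/K)^{\gamma}=0$ while the product on the left is strictly positive; the boundary case $i=K$ therefore seems to require either reading $(1-i/K)^{\gamma}$ as $((K-i+1)/K)^{\gamma}$ or a separate short treatment.
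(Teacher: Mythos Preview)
Your argument is correct, but the paper reaches the same conclusion by a more elementary route that avoids Gamma functions entirely. Rather than splitting off the factor $(1+\gamma/(K-j))$ and treating it via $\Gamma(x+\gamma)/\Gamma(x)\sim x^{\gamma}$, the paper applies the one-sided inequality $-\ln(1+x)\le -x+x^{2}$, valid for every $x\in(-\tfrac12,\infty)$, directly to the full bracket $[1+\delta/K+\gamma/(K-j)+\mathcal{O}(1/K)]$. This produces the linear contribution $-\gamma/(K-j)$ in the log-sum without any Taylor truncation issue near $j\approx K$; the quadratic error is dominated by $\gamma^{2}/(K-j)^{2}$, whose sum over $j$ is bounded by $\pi^{2}/6=\mathcal{O}(1)$. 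The remaining harmonic sum $\sum_{j=l}^{i-1}1/(K-j)$ is then compared to $\ln\bigl((K-l)/(K-i)\bigr)$, which is precisely the logarithm of the claimed ratio $(1-l/K)^{\gamma}/(1-i/K)^{\gamma}$. Your Gamma-function identity makes the provenance of the power-law factor more transparent, and your factorisation of the $\delta$-correction is arguably cleaner; the paper's approach, on the other hand, disposes of the delicate large-$j$ regime in a single elementary inequality and needs no special-function asymptotics.

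Your closing observation about $i=K$ is apt: the first inequality in (\ref{eq:productbound}) is formally vacuous there, and the paper's own derivation has the same defect since $\ln\bigl((K-l)/(K-i)\bigr)$ diverges. In the subsequent lemma, however, the sharper first bound is in fact only invoked for $i\le K-1$ (the part labelled $C$), while for the boundary term $i=K$ (part $D$) only the cruder uniform bound $E$ is used, so the gap is harmless in context.
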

\begin{proof}
Using lemma \ref{th:quotient} we can rewrite
\begin{equation}
\prod_{j=l}^{i-1} \frac{\lambda_j}{\mu_j} \nonumber
\end{equation}
as 
\begin{linenomath}\begin{equation}
\exp\left[-\sum_{j=l}^{i-1} \ln \left(\frac{1+\frac{\beta}{K}\left(1-\frac{j}{K}\right)}{1+\frac{\beta}{K}\frac{j}{K}} \cdot \left[1+\frac{\delta}{K}+\frac{\gamma}{K-j}+\mathcal{O}\left(\frac{1}{K}\right)\right]\right)\right]
\end{equation}\end{linenomath}
\begin{linenomath}\begin{equation}
=\exp\Bigg[\underbrace{\sum_{j=l}^{i-1}\ln\left(1+\frac{\beta}{K}\frac{j}{K}\right)- \ln\left(1+\frac{\beta}{K}\left(1-\frac{j}{K}\right)\right) - \ln\left(1+\frac{\delta}{K}+\frac{\gamma}{K-j}+\mathcal{O}\left(\frac{1}{K}\right) \right)}_F\Bigg]\:.
\end{equation}\end{linenomath}
Since $\ln(1+x)=x+\mathcal{O}(x^2)$ for $x$ close to zero and $-\ln(1+x)\leq -x + x^2$ for all $x \in (-0.5,\infty)$ 
\begin{equation}
F \leq \sum_{j=l}^{i-1} \left[\frac{\beta}{K}\frac{j}{K} -\frac{\beta}{K}\left(1-\frac{j}{K}\right)-\frac{\delta}{K}-\frac{\gamma}{K-j}+\frac{1}{(K-j)^2} + \mathcal{O}\left(\frac{1}{K}\right)\right]\:.
\end{equation}
and
\begin{equation}
\sum_{j=l}^{i-1} \frac{1}{(K-j)^2} < \sum_{j=1}^{\infty} \frac{1}{j^2} = \frac{\pi^2}{6} = \mathcal{O}\left(1\right)\:.
\end{equation}
With this
\begin{equation}
F \leq \sum_{j=l}^{i-1} \left[\frac{\beta}{K}\frac{j}{K} -\frac{\beta}{K}\left(1-\frac{j}{K}\right) -\frac{\delta}{K}-\frac{\gamma}{K-j}\right] + \mathcal{O}\left(1\right)
\end{equation}
\begin{equation}
\leq 2|\beta|+|\delta|-\gamma \ln\left(\frac{K-l}{K-i}\right)+ \mathcal{O}\left(1\right)
\end{equation}
and inequality (\ref{eq:productbound}) follows from this.
\end{proof}

\begin{lemma}
There exists a function $r(\psi)$ such that
\begin{equation}
\lim_{K \to \infty} \mathbf{E}_{\psi}[\sigma_0(X_K)] \leq r(\psi)
\end{equation}
and
\begin{equation}
\lim_{\psi \to 0} r(\psi) = 0\:.
\end{equation}
\label{th:T_K/K_convergence}
\end{lemma}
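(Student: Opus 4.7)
The plan is to apply the exact formula \eqref{eq:Tm} for $\tau_m$ to the rescaled process, insert the tight bound on $\prod_{j=l}^{i-1}\lambda_j/\mu_j$ from the previous lemma together with a matching lower bound on $\mu_i$, and pass to a Riemann-sum limit that yields an explicit integral for $r(\psi)$ whose behaviour near $\psi=0$ is transparent.

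By the time rescaling \eqref{eq:rescaledprocess}, for $\psi\in(0,1]$ (taking $\lfloor\psi K\rfloor$ as the starting state of $N$) we have
\begin{equation}
\mathbf{E}_\psi[\sigma_0(X_K)]=\frac{1}{K}\tau_{\lfloor\psi K\rfloor}=\frac{1}{K}\sum_{l=1}^{\lfloor\psi K\rfloor}\sum_{i=l}^{K}\frac{1}{\mu_i}\prod_{j=l}^{i-1}\frac{\lambda_j}{\mu_j}.
\end{equation}
The first (birth--death) summand of \eqref{eq:mujrescaled} alone gives a lower bound $\mu_j\geq C\cdot j(K-j)/K$ for $1\leq j\leq K-1$, with $C>0$ independent of $K$, while the immigration term gives $\mu_K\geq C'>0$. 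Combining these with the tight inequality in \eqref{eq:productbound} on the bulk $i\leq K-1$, and with the crude inequality $\prod\leq E$ for the single boundary term $i=K$, yields the upper bound
\begin{equation}
\mathbf{E}_\psi[\sigma_0(X_K)]\leq \frac{E}{C}\cdot\frac{1}{K^2}\sum_{l=1}^{\lfloor\psi K\rfloor}\sum_{i=l}^{K-1}\frac{(1-i/K)^{\gamma-1}}{(i/K)(1-l/K)^\gamma}+\frac{E\psi}{C'}.
\end{equation}

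Next, I would recognise the double sum as a Riemann sum and set
\begin{equation}
r(\psi):=\frac{E}{C}\int_0^\psi\frac{1}{(1-\xi)^\gamma}\int_\xi^1\frac{(1-\eta)^{\gamma-1}}{\eta}\,d\eta\,d\xi+\frac{E\psi}{C'},
\end{equation}
so that $\limsup_{K\to\infty}\mathbf{E}_\psi[\sigma_0(X_K)]\leq r(\psi)$. For the second claim $r(\psi)\to 0$, split the inner integral at $\eta=1/2$: its contribution from $[1/2,1]$ is a finite constant since $\gamma>0$, while on $[\xi,1/2]$ the factor $(1-\eta)^{\gamma-1}$ is bounded and $\int_\xi^{1/2}d\eta/\eta=\mathcal{O}(|\ln\xi|)$. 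Hence the inner integral is $\mathcal{O}(|\ln\xi|)$, $(1-\xi)^{-\gamma}$ is bounded on $[0,1/2]$, and therefore $r(\psi)=\mathcal{O}(\psi|\ln\psi|)\to 0$.

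The main obstacle is the Riemann-sum-to-integral passage near the two singularities of the integrand, at $\eta=1$ (when $\gamma<1$) and at $\xi=0$. Both are integrable, so a uniform (in $K$) integrable majorant can be produced by splitting the $(l,i)$-lattice into a bulk part, where $i/K\leq 1-\varepsilon$, on which standard Riemann convergence applies, and a thin boundary strip $i>K(1-\varepsilon)$, whose contribution is uniformly controlled by the elementary estimate $K^{-1}\sum_{i>K(1-\varepsilon)}(1-i/K)^{\gamma-1}\leq \varepsilon^\gamma/\gamma$. Sending $\varepsilon\to 0$ after $K\to\infty$ then recovers the full integral $r(\psi)$, completing the bound.
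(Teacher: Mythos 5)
Your proposal is correct in substance and follows essentially the same route as the paper: the exact sum \eqref{eq:Tm}, the product bound \eqref{eq:productbound} in its refined form $E\,(1-i/K)^{\gamma}/(1-l/K)^{\gamma}$, the lower bounds $\mu_i \ge H\cdot i(K-i)/K$ and $\mu_K>\gamma/2$, and a sum--integral comparison; the paper merely organizes that comparison as four explicit elementary bounds (its terms $A$--$D$, yielding $\psi/(1-\psi)$, $4\psi\ln(1/(2\psi))$, $\psi^{\gamma+1}/(\gamma(1-\psi)^{\gamma+1})$ and $2E\psi/\gamma$) rather than as a single double integral, and both choices of $r(\psi)$ vanish like $\psi|\ln\psi|$. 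The one step you should not wave through is the claim that ``standard Riemann convergence applies'' on the bulk $i/K\le 1-\varepsilon$: the factor $1/\eta$ is unbounded as $\eta\to 0$ and your lattice reaches $\eta=i/K=1/K$, so the corner near $(\xi,\eta)=(0,0)$ needs its own uniform-in-$K$ estimate just as much as the strip near $\eta=1$ does. That estimate is elementary --- $\frac{1}{K}\sum_{i=l}^{K/2}K/i\le \ln(K/(2l))+\mathcal{O}(1)$ followed by $\frac{1}{K}\sum_{l\le\psi K}\ln(K/(2l))\le \psi\ln(1/(2\psi))+\psi+o(1)$, which is exactly what the paper's terms $A$ and $B$ supply --- and once it is inserted your bound $r(\psi)=\mathcal{O}(\psi|\ln\psi|)\to 0$ goes through.
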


\begin{proof}
\begin{equation}
\mathbf{E}_{\psi}[\sigma_0(X_K)] \leq \frac{\tau_{\lceil K \psi \rceil}}{K} = \frac{1}{K}\sum_{l=1}^{\lceil K \psi \rceil} z_l = \frac{1}{K}\sum_{l=1}^m z_l + \frac{z_{\lceil K \psi \rceil}}{K}
\label{eq:Epsisum}
\end{equation}
with $m={\lceil K \psi \rceil}-1$ and where $\lceil x \rceil$ denotes the smallest integer larger than or equal to $x$. $\tau_i$ is given by equation (\ref{eq:Tm}) and $z_i$ by equation (\ref{eq:zm}). We first consider the first summand:
\begin{linenomath}\begin{equation}
\frac{1}{K}\sum_{l=1}^m z_l = \frac{1}{K}\sum_{l=1}^m \sum_{i=l}^{K}\frac{1}{\mu_i} \prod_{j=l}^{i-1} \frac{\lambda_j}{\mu_j}=\frac{1}{K}\sum_{i=1}^K \frac{1}{\mu_i} \sum_{l=1}^{\min(i,m)} \prod_{j=l}^{i-1} \frac{\lambda_j}{\mu_j}
\end{equation}\end{linenomath}

\begin{linenomath}\begin{equation}
= \frac{1}{K}\Bigg(\underbrace{\sum_{i=1}^m \frac{1}{\mu_i} \sum_{l=1}^{i} \prod_{j=l}^{i-1} \frac{\lambda_j}{\mu_j}}_A + \underbrace{\sum_{i=m+1}^{K-m} \frac{1}{\mu_i} \sum_{l=1}^{m} \prod_{j=l}^{i-1} \frac{\lambda_j}{\mu_j}}_B + \underbrace{\sum_{i=K-m+1}^{K-1} \frac{1}{\mu_i} \sum_{l=1}^{m} \prod_{j=l}^{i-1} \frac{\lambda_j}{\mu_j}}_C + \underbrace{\frac{1}{\mu_K} \sum_{l=1}^{m} \prod_{j=l}^{K-1} \frac{\lambda_j}{\mu_j}}_D\Bigg)\:.
\end{equation}\end{linenomath}
Note that
\begin{linenomath}\begin{equation}
\mu_i \geq K \cdot \frac{i}{K}\left(1-\frac{i}{K}\right) \frac{\left[1+\frac{\beta}{K}\left(1-\frac{i}{K}\right)\right]\left(1+\frac{\delta}{K}\right)}{1+\frac{\delta}{K}\left(1-\frac{i}{K}\right)}\geq K \cdot \frac{i}{K}\left(1-\frac{i}{K}\right) \underbrace{\frac{\left[1-\frac{|\beta|}{K}\right]\left(1-\frac{|\delta|}{K}\right)}{1+\frac{|\delta|}{K}}}_{=:H}\:.
\end{equation}\end{linenomath}
With this
\begin{linenomath}\begin{equation}
\frac{A}{K} \leq \frac{E}{K} \cdot \sum_{i=1}^m \frac{i}{\mu_i} \leq \frac{E}{HK} \sum_{i=1}^m \frac{1}{1-\frac{i}{K}} \leq \frac{E}{HK}\frac{m}{1-\frac{m}{K}}\leq \frac{E}{H}\frac{\psi}{1-\psi}\:,
\end{equation}\end{linenomath}
and
\begin{linenomath}\begin{equation}
\frac{B}{K} \leq \frac{m \cdot E}{HK} \sum_{i=m+1}^{K-m} \frac{K}{i(K-i)} \leq \frac{2m \cdot E}{HK} \cdot \sum_{i=m+1}^{K/2} \frac{K}{i\cdot \frac{K}{2}}
=\frac{4m \cdot E}{HK} \ln\left(\frac{K}{2m}\right) \leq \frac{4 \psi \cdot E}{H} \ln\left(\frac{1}{2\psi}\right)\:,
\end{equation}\end{linenomath}
and
\begin{equation}
\frac{C}{K} \leq \frac{E}{HK} \sum_{i=K-m+1}^{K-1} \frac{1}{i\left(1-\frac{i}{K}\right)} \sum_{l=1}^m \frac{\left(1-\frac{i}{K}\right)^{\gamma}}{\left(1-\frac{l}{K}\right)^{\gamma}}
\leq \frac{E \cdot m}{HK \cdot (K-m) \cdot \left(1-\frac{m}{K}\right)^{\gamma}} \sum_{i=K-m+1}^{K-1} \left(1-\frac{i}{K}\right)^{\gamma-1}\:.
\end{equation}
Let $f(x)=(1-x)^{\gamma-1}$. Then $f'(x)=(1-\gamma)(1-x)^{\gamma-2}$. $f'(x)<0$ if $\gamma>1$ and $f'(x)>0$ if $\gamma <1$ for all $x \in (0,1)$. 
If $f'(x)>0$
\begin{equation}
\sum_{i=K-m}^{K-1} f\left(\frac{i}{K}\right)
\end{equation}
is a lower sum of the integral
\begin{equation}
K \cdot \int_{1-\frac{m}{K}}^1 f(x) dx
\end{equation}
and if $f'(x)<0$
\begin{equation}
\sum_{i=K-m+1}^{K} f\left(\frac{i}{K}\right)
\end{equation}
is a lower sum. Thus, in both cases
\begin{equation}
\sum_{i=K-m+1}^{K-1} \left(1-\frac{i}{K}\right)^{\gamma-1} < K \int_{1-\frac{m}{K}}^1 (1-x)^{\gamma-1} dx = \frac{K}{\gamma}\left(\frac{m}{K}\right)^{\gamma}\:.
\end{equation}
Consequently
\begin{equation}
\frac{C}{K} \leq \frac{E \cdot m}{H \cdot (K-m) \cdot \left(1-\frac{m}{K}\right)^{\gamma}}\frac{1}{\gamma}\left(\frac{m}{K}\right)^{\gamma} \leq \frac{E\cdot \left(\psi\right)^{\gamma+1}}{H \cdot \gamma \cdot \left(1-\psi \right)^{\gamma+1}}\:.
\label{eq:Clim}
\end{equation}
Since
\begin{linenomath}\begin{equation}
\mu_K = \gamma \cdot \frac{\left(K+1+\frac{\beta}{K}\right)\cdot K}{\left(K+1+\frac{\beta}{K}\right)\cdot K+\left(K+1+\beta\right)}>\frac{\gamma}{2}\:,
\end{equation}
\begin{equation}
\frac{D}{K} \leq \frac{m \cdot E}{K \cdot \mu_K} \leq \frac{2E\psi}{\gamma} \:.
\end{equation}\end{linenomath}
Turning to the second summand in equation (\ref{eq:Epsisum}) and using equation (\ref{eq:zm})
\begin{linenomath}\begin{equation}
\frac{z_{\lceil K \psi \rceil}}{K}=\sum_{i=\lceil K \psi \rceil}^{K}\frac{1}{\mu_i} \prod_{j=\lceil K \psi \rceil}^{i-1} \frac{\lambda_j}{\mu_j}
\end{equation}
\begin{equation}
\leq \frac{E}{H}\sum_{i=\lceil K \psi \rceil}^{K-1}\frac{1}{i \cdot (K-i)} + \frac{E}{\mu_K \cdot K} \leq \frac{E}{H} \cdot \frac{\ln(K-\lceil K \psi \rceil)}{\lceil K \psi \rceil} + \frac{2E}{\gamma \cdot K} \xrightarrow[K \to \infty]{} 0\:.
\end{equation}\end{linenomath}
In conclusion
\begin{equation}
\lim_{K \to \infty} \mathbf{E}_{\psi}[\sigma_0(X_K)] \leq \frac{E}{H} \left( \frac{\psi}{1-\psi} + 4 \psi \ln\left(\frac{1}{2\psi}\right) + \frac{\psi^{\gamma+1}}{\gamma \cdot (1-\psi)^{\gamma+1}} \right) +\frac{2E\psi}{\gamma} =r(\psi)\:.
\end{equation}
To see that $\lim\limits_{\psi \to 0} r(\psi) = 0$ note that
\begin{equation}
\lim_{\psi \to 0} \psi \cdot \ln\left(\frac{1}{2\psi}\right) = \lim_{\psi \to 0} \frac{\frac{d}{d\psi} \ln\left(\frac{1}{2\psi}\right)}{\frac{d}{d\psi} \frac{1}{\psi}} = \lim_{\psi \to 0} \frac{\frac{1}{\psi}}{\frac{1}{\psi^2}}=\lim_{\psi \to 0} \psi =0\:.
\end{equation}
\end{proof}

\begin{proof}[Proof of Theorem \ref{th:timeconvergence}]
\begin{linenomath}\begin{equation}
\lim_{K \to \infty} \left|\mathbf{E}_1\left[\sigma_0(X_K)\right]-\mathbf{E}_1\left[\sigma_0(X)\right]\right|
\end{equation}
\begin{equation}
= \lim_{K \to \infty} \left|\mathbf{E}_1\left[\sigma_0(X_K)\right]-\mathbf{E}_1\left[\sigma_l(X_K)\right]+\mathbf{E}_1\left[\sigma_l(X_K)\right]-\mathbf{E}_1\left[\sigma_l(X)\right]+ \mathbf{E}_1\left[\sigma_l(X)\right] -\mathbf{E}_1\right[\sigma_0(X)\left]\right|
\end{equation}
\begin{equation}
\leq \lim_{K \to \infty} \bigg( \underbrace{\mathbf{E}_1\left[\sigma_0(X_K)\right]-\mathbf{E}_1\left[\sigma_l(X_K)\right]}_{=E_l[\sigma_0(X_K)]}+\left|\mathbf{E}_1\left[\sigma_l(X_K)\right]-\mathbf{E}_1\left[\sigma_l(X)\right]\right|+\left|\mathbf{E}_1\left[\sigma_l(X)\right] -\mathbf{E}_1\left[\sigma_0(X)\right]\right|\bigg)
\end{equation}
\begin{equation}
\leq r(l) + \lim_{K \to \infty} |\mathbf{E}_1[\sigma_l(X_K)]-\mathbf{E}_1[\sigma_l(X)]| + |\mathbf{E}_1[\sigma_l(X)]-\mathbf{E}_1[\sigma_0 (X)]|\:.
\end{equation}\end{linenomath}
provided that the limit in the second summand exists.
Following \citet{3347Kurtz1981} p. 13-14 we argue that for a fixed path $x$, $\sigma_l(x)$ is decreasing in $l$ and can therefore be discontinuous at only a countable number of values of $l$. Since the diffusion process is a rescaled Brownian motion, there is no point in the interval (0,1) that is exceptional compared to the other points and the points of discontinuity are placed according to some continuous probability distribution on the interval ($P(\text{discontinuity at } l)=0 \; \forall l$). For any discontinuity point thus the collection of paths for which the discontinuity is placed exactly at $l$ is a null set. Since the union of countably many null sets is a null set itself, the probability that a random path has a discontinuity at $l$ is zero.
Therefore, the weak convergence of $X_K$ to $X$ implies the convergence of $\sigma_l(X_K)$ to $\sigma_l(X)$ according to the Continuous Mapping Theorem (Theorem 13.25 in \citealt{3349Klenke2006}) such that the second limit exists and is equal to 0. See Lemma 3.3 in \citet{3348Chigansky2012} for a more rigorous proof requiring a positive and increasing scale function $S(x)$, which we can achieve by choosing
\begin{equation}
S(x)= \int_0^x e^{+\beta y (1-y) - \gamma \ln(1-y) + \delta y} dy\:
\end{equation}
and a positive quadratic variation $[X,X]_t$ for $t>0$.

After the middle summand vanished, we take the limit $l \to 0$:
\begin{equation}
\lim_{K \to \infty} \left|\mathbf{E}_1\left[\sigma_0(X_K)\right]-\mathbf{E}_1\left[\sigma_0(X)\right]\right| \leq \lim_{l \to 0} r(l) + \lim_{l \to 0}  |\mathbf{E}_1[\sigma_l(X)]-\mathbf{E}_1[\sigma_0 (X)]| = 0+0 = 0\:.
\end{equation}
The first limit follows from Lemma \ref{th:T_K/K_convergence}, the second limit from the boundedness of expected times under the diffusion process (Theorem \ref{th:diffusiontime}) and the application of dominated convergence. 
\end{proof}

Consequently, the expected time to extinction under the diffusion process is a valid approximation for the expected time under the rescaled birth and death process.
Figure \ref{fig:diffusionfits} visualizes the quality of the approximation for different community sizes.

\begin{figure}
\centering
\includegraphics[width=0.5\textwidth]{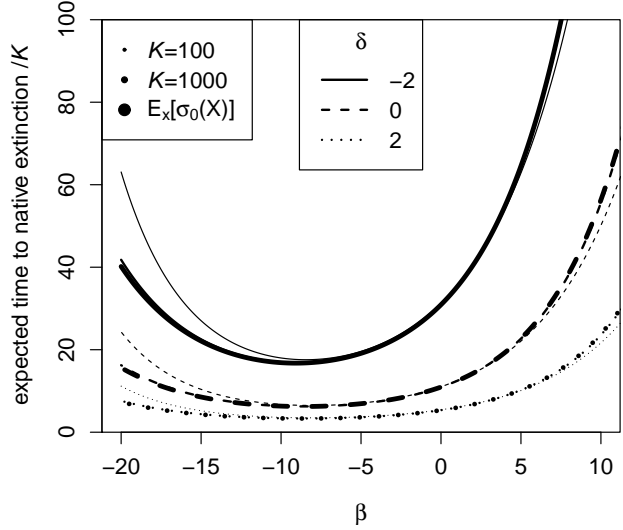}
\caption{The exact solution for the expected time to the extinction of the native species (\ref{eq:T_K_solution}) approaches the diffusion approximation (\ref{eq:g(1)}) as the community size increases. ($\gamma=0.1$).
\newline
\newline
}
\label{fig:diffusionfits}
\end{figure}

\section*{Derivation of establishment time and exclusion time}
The recursion for the expected time $\hat{\tau}_n$ to reach $K/2$ from a population size $n$ above $K/2$ is:
\begin{equation}
\hat{\tau}_n = \begin{cases}
0 & \text{if } n=\frac{K}{2} \\[0.5em]
\frac{1}{\lambda_n + \mu_n} + \frac{\lambda_n}{\lambda_n + \mu_n}\hat{\tau}_{n+1} + \frac{\mu_n}{\lambda_n + \mu_n}\hat{\tau}_{n-1} & \text{if }\frac{K}{2}+1\leq n \leq K-1 \\[0.5em]
\frac{1}{\mu_K}+\hat{\tau}_{K-1} & \text{if } n=K
\end{cases}\:.
\label{eq:recursion_establishment}
\end{equation}
This can be solved in the same way as recursion (\ref{eq:recursion_full}). For the expected establishment time if the native species starts from its carrying capacity, we obtain
\begin{equation}
\hat{\tau}_K = \sum_{l=K/2+1}^{K}\sum_{i=l}^{K}\frac{1}{\mu_i} \prod_{j=l}^{i-1} \frac{\lambda_j}{\mu_j}\:.
\label{eq:exact_esttime}
\end{equation}

We compute the exclusion time $\tilde{\tau}_{K/2}$ (the time to reach either 0 or $K$ from $K/2$) under the assumption of symmetry and set $\gamma=0$ in model equation \ref{eq:mu_def}. The appropriate recursion is:
\begin{equation}
\tilde{\tau}_n = \begin{cases}
0 & \text{if } n=0 \\[0.5em]
\frac{1}{\lambda_n + \mu_n} + \frac{\lambda_n}{\lambda_n + \mu_n} \tilde{\tau}_{n+1}+ \frac{\mu_n}{\lambda_n + \mu_n}\tilde{\tau}_{n-1} & \text{if }1\leq n \leq K-1 \\[0.5em]
0 & \text{if } n=K
\end{cases}\:.
\label{eq:recursion_exclusion}
\end{equation}
Equations (\ref{eq:recursion_rewrite}) and (\ref{eq:z_recursion}) remain valid but now we have $z_1 =  \tilde{\tau}_1 -  \tilde{\tau}_0 =  \tilde{\tau}_1$ and $z_K =  \tilde{\tau}_K -  \tilde{\tau}_{K-1} = -  \tilde{\tau}_{K-1}$, such that
\begin{equation}
z_m = -  \tilde{\tau}_{K-1} \prod_{j=m}^{K-1}\frac{\lambda_j}{\mu_j} + \sum_{i=m}^{K-1} \frac{1}{\mu_i} \prod_{j=m}^{i-1}\frac{\lambda_j}{\mu_j}\;\;\text{ and}
\end{equation}

\begin{equation}
 \tilde{\tau}_1 = z_1 = -  \tilde{\tau}_{K-1} \prod_{j=1}^{K-1}\frac{\lambda_j}{\mu_j} + \sum_{i=1}^{K-1} \frac{1}{\mu_i} \prod_{j=1}^{i-1}\frac{\lambda_j}{\mu_j}\:.
\end{equation}
Because of symmetry
\begin{equation}
 \tilde{\tau}_1 =  \tilde{\tau}_{K-1} = \frac{\sum_{i=1}^{K-1} \frac{1}{\mu_i} \prod_{j=1}^{i-1}\frac{\lambda_j}{\mu_j}}{1+\prod_{j=1}^{K-1}\frac{\lambda_j}{\mu_j}}=\frac{\sum_{i=1}^{K-1} \frac{1}{\mu_i} \prod_{j=1}^{i-1}\frac{\lambda_j}{\mu_j}}{2} \;\;\text{ and}
\end{equation}

\begin{equation}
 \tilde{\tau}_{K/2}= \sum_{l=1}^{K/2}\sum_{i=l}^{K-1} \frac{1}{\mu_i} \prod_{j=1}^{i-1}\frac{\lambda_j}{\mu_j} - \frac{1}{2}\left(\sum_{i=l}^{K-1} \frac{1}{\mu_i} \prod_{j=1}^{i-1}\frac{\lambda_j}{\mu_j}\right)\left(\sum_{l=1}^{K/2}\prod_{j=l}^{K-1}\frac{\lambda_j}{\mu_j}\right)\:.
\label{eq:exactfixtime}
\end{equation}

Under the diffusion approximation, expressions for establishment time $\hat{g}$ and exclusion time $\tilde{g}$ can be found by using appropriate boundary conditions in equation (\ref{eq:general_solution}). For the establishment time in the symmetric case, we need to use $\hat{g}(1/2)=0$ and $|\lim\limits_{x \nearrow 1} \hat{g}'(x)|<\infty$, such that 
\begin{equation}
\hat{g}(1)=\int_{0.5}^1 \frac{1}{(1-\xi)^{\gamma}} \int_{\xi}^1 \frac{(1-\eta)^{\gamma-1}}{\eta}\cdot e^{\beta\left[\xi(1-\xi)-\eta(1-\eta)\right]} d \eta \;d\xi \:.
\end{equation}
For the exclusion time, we need to use the boundary conditions $\tilde{g}(1)=\tilde{g}(0)=0$ and we set $\gamma=0$, which leads to
\begin{equation}
\tilde{g}(0.5)=\int_0^{1/2}\frac{e^{-\beta \eta (1-\eta)}}{\eta(1-\eta)} \int_0^{\eta} e^{\beta \xi (1-\xi)} d \xi \;d \eta\:.
\end{equation}

\section*{Robustness of results}
In this study, we have shown that according to our model for the competitive dynamics of a native and an introduced species, the expected time to the extinction of the native species is minimized for intermediate intensities of interspecific competition. In this section we explore to what extent this result is robust to model modifications.

\subsection*{The assumption of a fixed community size}
In our original model, we assumed a fixed total community size. This assumption induces a strong coupling between the population dynamics of the two species. Here we consider a model in which the population dynamics are coupled only because individual death rates are proportional to the competition experienced, which is a function of both population sizes. To describe the system, we then need both the population size $n_1$ of the native species and the population size $n_2$ of the introduced species. We thus define a Markov process with state space is $\{0,1,2,\dots\}\times\{0,1,2,\dots\}$ and transition rates
\begin{equation}
(n_1,n_2)\rightarrow \begin{cases}
(n_1+1,n_2): & n_1 \\
(n_1-1,n_2): & \frac{c(n_1,n_2)}{K} \cdot n_1 \\
(n_1,n_2+1): & w_i n_2 + \gamma \\
(n_1,n_2-1): & \frac{c(n_2,n_1)}{K} \cdot n_2
\end{cases}\:.
\label{eq:nonzerosumtransitions}
\end{equation}
Note that in this model, there is no strict upper limit to the population sizes. However, when the native  species is alone its death rate exceeds its birth rate for population sizes larger than $K$ and similarly for the introduced species when its size is larger than $w_i K$. When both species are together, the total community size fluctuates around a value which, if $\alpha<1$, can exceed both of these quantities and which increases with decreasing competition intensity. The behavior is thus similar to the classical Lotka-Volterra competition model.

Using the same approach as for the eco-genetic model, we solved numerically for the expected time to the extinction of the native species when its initial size is $K$ while the introduced species is initially absent. To obtain a finite transition matrix we needed to assume a maximum community size. We chose $4\cdot \max(Kw_i,K)$ as excursions of the system to such a high community size are unlikely. For all parameter combinations we examined, this approximation to the expected time to extinction exhibits a minimum at intermediate competition intensities (see Fig. \ref{fig:nonzerosum} for an example). Due to the two-dimensional state space, these computations are only possible for small $K$.

\begin{figure}
\centering
\includegraphics[width=0.5\textwidth,trim = 0mm 0mm 0mm 15mm, clip]{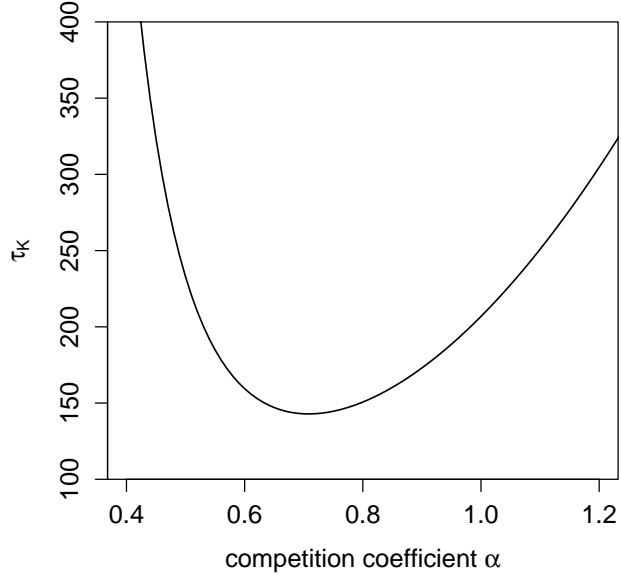}
\caption{The expected time to the extinction of the native species $\tau_K$ in the model without a fixed community size with transition rates given by \eqref{eq:nonzerosumtransitions}. ($K=20,\gamma=0.1,w_i=1$)}
\label{fig:nonzerosum}
\end{figure}

\subsection*{Formulation of transition rates as in neutral community theory}
Our model has some parallels to neutral community theory and related non-neutral models. However, most of these models have a different way of incorporating immigration. If a vacancy in the local community is created, with probability $1-m$ it is filled by the offspring of an individual that is already in the community and with probability $m$ by a migrant from the metacommunity \citep[see e.g.][]{702Etienne2007}.
In our case, immigrations from the metacommunity correspond to introduction events. In analogy to the transition rates in neutral community theory, we can reformulate our transition rates \eqref{eq:lambda_def} and \eqref{eq:mu_def} as:
\begin{linenomath}\begin{equation}
\lambda_n =\underbrace{\frac{c(K-n,n)\cdot (K-n)}{K}}_{\substack{\text{rate at which}\\ \text{members of the introduced}\\ \text{species die}}} \cdot \underbrace{(1-m)\cdot \frac{n}{(K-n)\cdot w+ n}}_{\substack{\text{probability that a}\\ \text{native individual} \\ \text{gives birth}}}
\label{eq:lambdaneutral_def}
\end{equation}\end{linenomath}
and
\begin{linenomath}\begin{equation}
\mu_n = \underbrace{\frac{c(n,K-n)\cdot n}{K}}_{\substack{\text{rate at which}\\ \text{native individuals die}}} \cdot \Big[ \underbrace{(1-m)\cdot \frac{(K-n)\cdot w}{(K-n) \cdot w+ n}}_{\substack{\text{probability that an}\\ \text{introduced individual} \\ \text{already in the community}  \\ \text{gives birth}}} + \underbrace{m}_{\substack{\text{probability that the} \\ \text{spot is colonized by a} \\ \text{new introduced individual}}}\Big]\:.
\label{eq:muneutral_def}
\end{equation}\end{linenomath}
We explored a range of parameter combinations and found that the expected time to the extinction of the native species under this model (Fig. \ref{fig:TK_neutral}) behaves very similarly to that under the original model (compare Fig. \ref{fig:omega0}). The minimum is preserved.

\begin{figure}
\centering
\includegraphics[width=0.5\textwidth,trim = 0mm 0mm 0mm 15mm, clip]{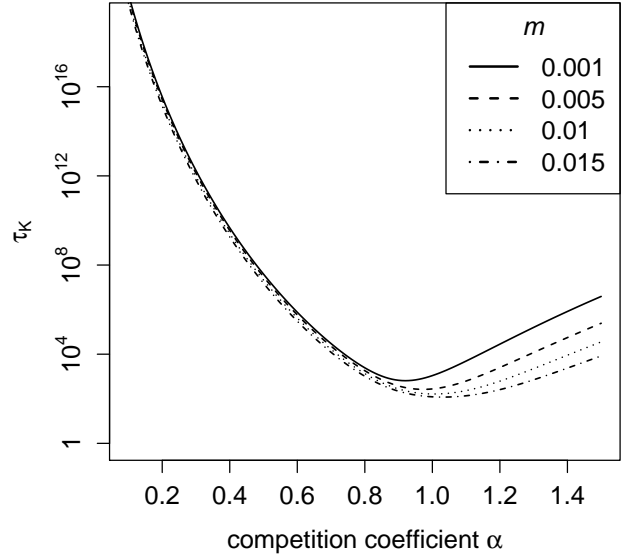}
\caption{The expected time to the extinction of the native species $\tau_K$ in a model with transition rates analogous to those in neutral community theory (see equations \ref{eq:lambdaneutral_def} and \ref{eq:muneutral_def}). ($K$ = 100, $w=1$.)}
\label{fig:TK_neutral}
\end{figure}

Moreover, if we assume the parameter scaling $\gamma=m/K$, then the Markov process converges to the diffusion process specified by the infinitesimal generator in \eqref{eq:generator}. To see this note that for $x \in [0,1]$
\begin{linenomath}\begin{align}
\lambda_{Kx} - \mu_{Kx} = & K \cdot \left(1-\frac{\gamma}{K}\right) \cdot \frac{\left(1+\frac{\beta}{K}x\right)(1-x)x-\left(1+\frac{\beta}{K}(1-x)\right)x(1-x)\left(1+\frac{\delta}{K}\right)}{1+\frac{\delta}{K}(1-x)} \\ \nonumber
 & -\frac{\gamma}{K} \cdot K \left(1+\frac{\beta}{K}(1-x)\right)x
\end{align}\end{linenomath}

\begin{linenomath}\begin{equation}
= \frac{-\beta (1-2x)(1-x) x - \delta(1-x)x +o_3(1)}{1+o_4(1)} \cdot (1+o_5(1)) - \gamma \cdot x+o_6(1)\:,
\end{equation}\end{linenomath}
and
\begin{linenomath}\begin{equation}
\lambda_{Kx} + \mu_{Kx}=K \cdot \frac{2(1-x)x +o_7(1)}{1+o_4(1)} \cdot (1+o_5(1)) + \gamma  \cdot x+o_6(1)
\end{equation}\end{linenomath}
similarly to \eqref{eq:lambdaKx-muKx} and \eqref{eq:lambdaKx+muKx}.  Thus the convergence of generators \eqref{eq:generatorconvergence} holds. The preconditions for the proofs of lemmata \ref{th:tightness} and \ref{th:uniqueness} are not affected by the modification of transition rates and thus by Theorem \ref{th:weakconvergence} the Markov process with transition rates \eqref{eq:lambdaneutral_def} and \eqref{eq:muneutral_def} converges in distribution to the same diffusion process as the original Markov process.

\subsection*{Competition affecting fecundity instead of mortality}
An alternative formulation of the transition rates \eqref{eq:lambda_def} and \eqref{eq:mu_def} in which competition affects the rate at which individuals give birth rather than the death rate, is:
\begin{linenomath}\begin{equation}
\lambda_n=\underbrace{\left(2-\frac{c(n,K-n)}{K}\right)\cdot n}_{\substack{\text{rate at which} \\ \text{native individuals} \\ \text{give birth}}} \; \cdot \underbrace{\frac{K-n}{K}}_{\substack{\text{probability that an} \\ \text{introduced individual dies}}}
\label{eq:lambdabirth}
\end{equation}\end{linenomath}
and
\begin{linenomath}\begin{equation}
\mu_n=\underbrace{w_i \cdot \left(2-\frac{c(K-n,n)}{K}\right)\cdot (K-n)}_{\substack{\text{rate at which} \\ \text{introduced individuals} \\ \text{give birth}}} \; \cdot \underbrace{\frac{n}{K}}_{\substack{\text{probability that a} \\ \text{native individual dies}}}+ \; \gamma \cdot \underbrace{\frac{n}{K+1}}_{\substack{\text{probability that the}\\ \text{introduced individual} \\ \text{replaces a native one}}}\:.
\label{eq:mubirth}
\end{equation}\end{linenomath}
The phenomenon that the expected time to native extinction is minimal at intermediate competition intensities is apparently not affected by this change in model formulation (Fig. \ref{fig:TK_birthcomp}). Again, evaluating $\lambda_{Kx} - \mu_{Kx}$ and $\lambda_{Kx} + \mu_{Kx}$ reveals that this Markov process converges to the same diffusion process as the original Markov process.

\begin{figure}
\centering
\includegraphics[width=0.5\textwidth,trim = 0mm 0mm 0mm 15mm, clip]{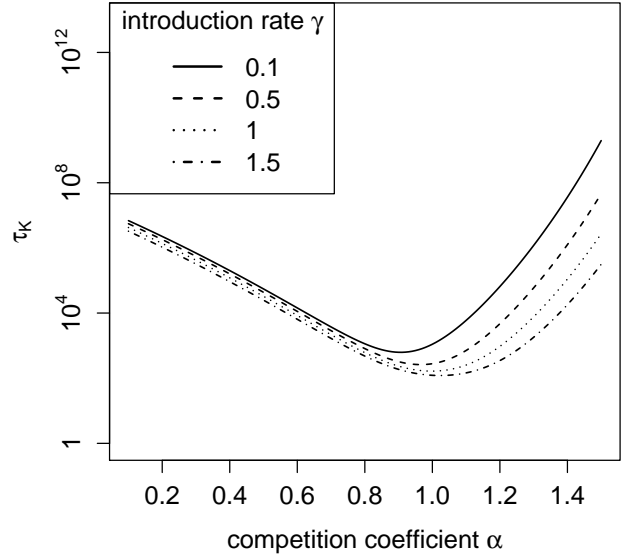}
\caption{The expected time to the extinction of the native species $\tau_K$ in a model in which competition influences the birth rate rather than the death rate as specified by \eqref{eq:lambdabirth} and \eqref{eq:mubirth}. ($K$ = 100, $w=1$.)}
\label{fig:TK_birthcomp}
\end{figure}

\subsection*{Immigration of native individuals}
To allow for an immigration of native individuals at rate $\gamma'$, we modify \eqref{eq:lambda_def} by adding an immigration term analogous to the introduction term in \eqref{eq:mu_def}:
\begin{linenomath}\begin{align}
\lambda_n = & \underbrace{\frac{c(K-n,n)\cdot (K-n)}{K}}_{\substack{\text{rate at which}\\ \text{members of the introduced}\\ \text{species die}}} \cdot \underbrace{\frac{n}{(K-n)\cdot w+ n}}_{\substack{\text{probability that a}\\ \text{native individual} \\ \text{gives birth}}} \nonumber \\
 & + \underbrace{\gamma'}_{\substack{\text{immigration} \\ \text{rate}}} \cdot \underbrace{\frac{c(K-n,n+1)\cdot (K-n)}{c(n+1,K-n)\cdot (n+1) + c(K-n,n+1)\cdot(K-n)}}_{\substack{\text{probability that an introduced}\\ \text{individual dies}}}\:.
\label{eq:lambdawithimmigration_def}
\end{align}\end{linenomath}
With this modification of transition rates, the state in which the native species is absent is no longer an absorbing state of the Markov process. Result \eqref{eq:T_K_solution} now gives us the expected time until the first extinction of the native species for realizations starting with $K$ individuals of the native species. Evaluating the expected time to the first extinction for a range of parameter combinations suggests that this quantity is higher than the expected time to extinction in the original model, but still exhibits a minimum at intermediate competition coefficients (Fig. \ref{fig:TK_withimmigration}).

\begin{figure}
\centering
\includegraphics[width=0.5\textwidth,trim = 0mm 0mm 0mm 15mm, clip]{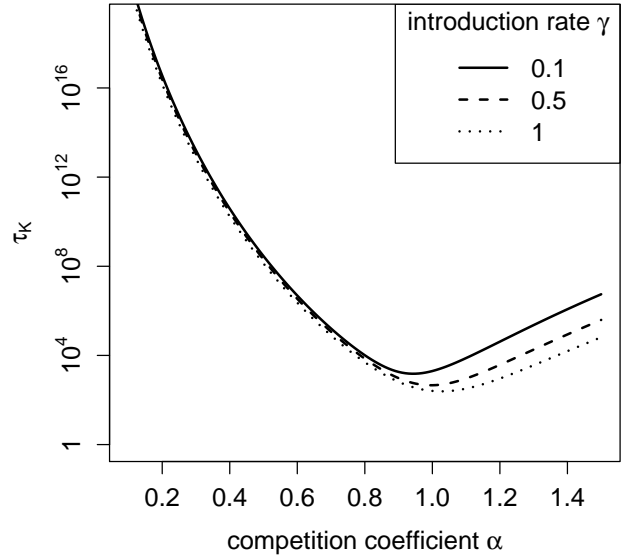}
\caption{The expected time to the first extinction $\tau_K$ of the native species if we allow for the immigration of native individuals at rate $\gamma'=0.5$ as specified in \eqref{eq:lambdawithimmigration_def}. ($K$ = 100, $w=1$.)}
\label{fig:TK_withimmigration}
\end{figure}

One might also be interested in long-term measures of impact based on the stationary distribution of the Markov process, for example the proportion of time during which the native species is absent or the average population size of the native species. Unlike the expected time to the first extinction, these quantities exhibit monotonic relationships with competition intensity (Fig. \ref{fig:longterm}).

\begin{figure}
\centering
\subfloat[]{\includegraphics[width=0.5\textwidth,trim = 0mm 0mm 0mm 15mm, clip]{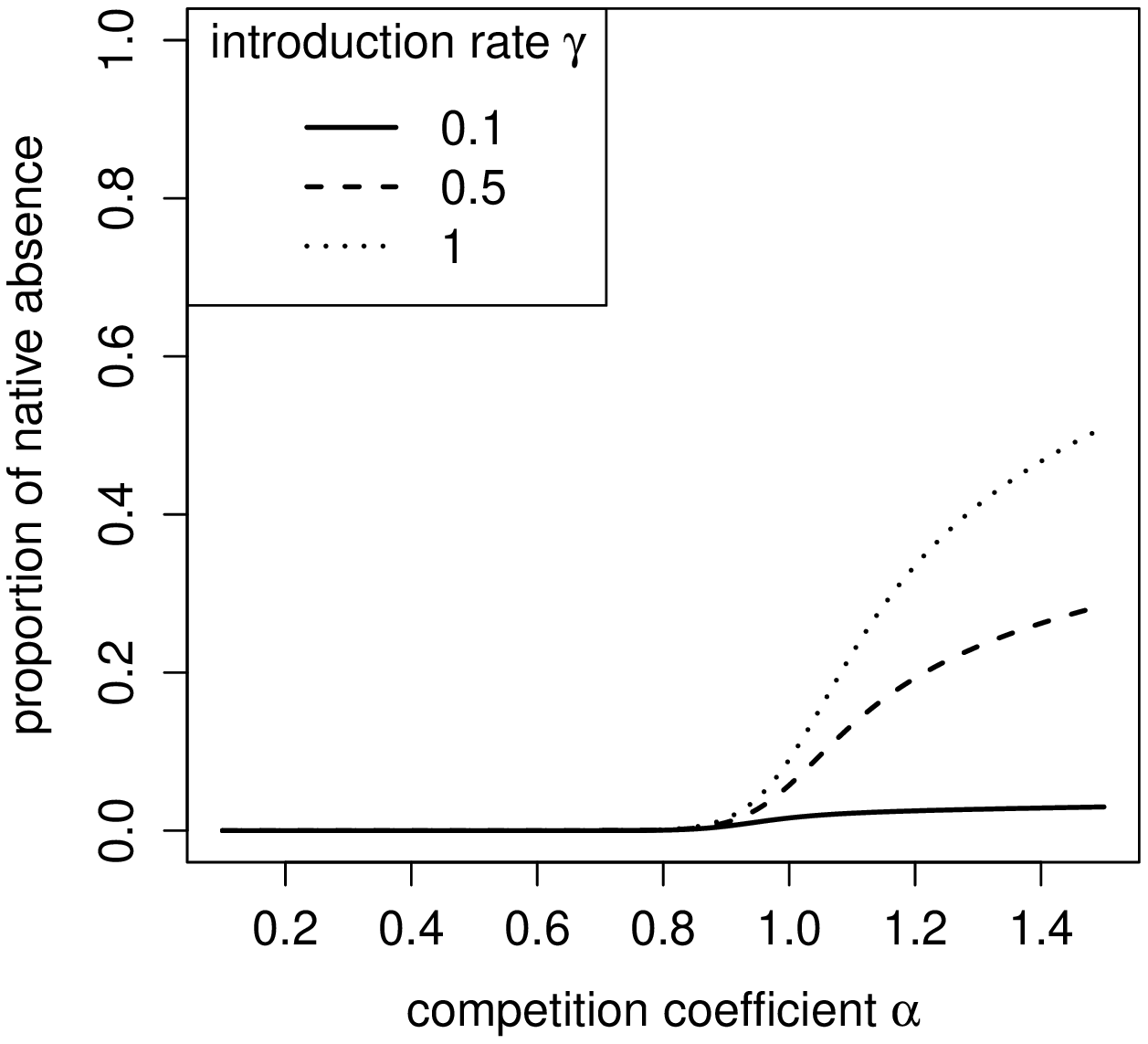}}
\subfloat[]{\includegraphics[width=0.5\textwidth,trim = 0mm 0mm 0mm 15mm, clip]{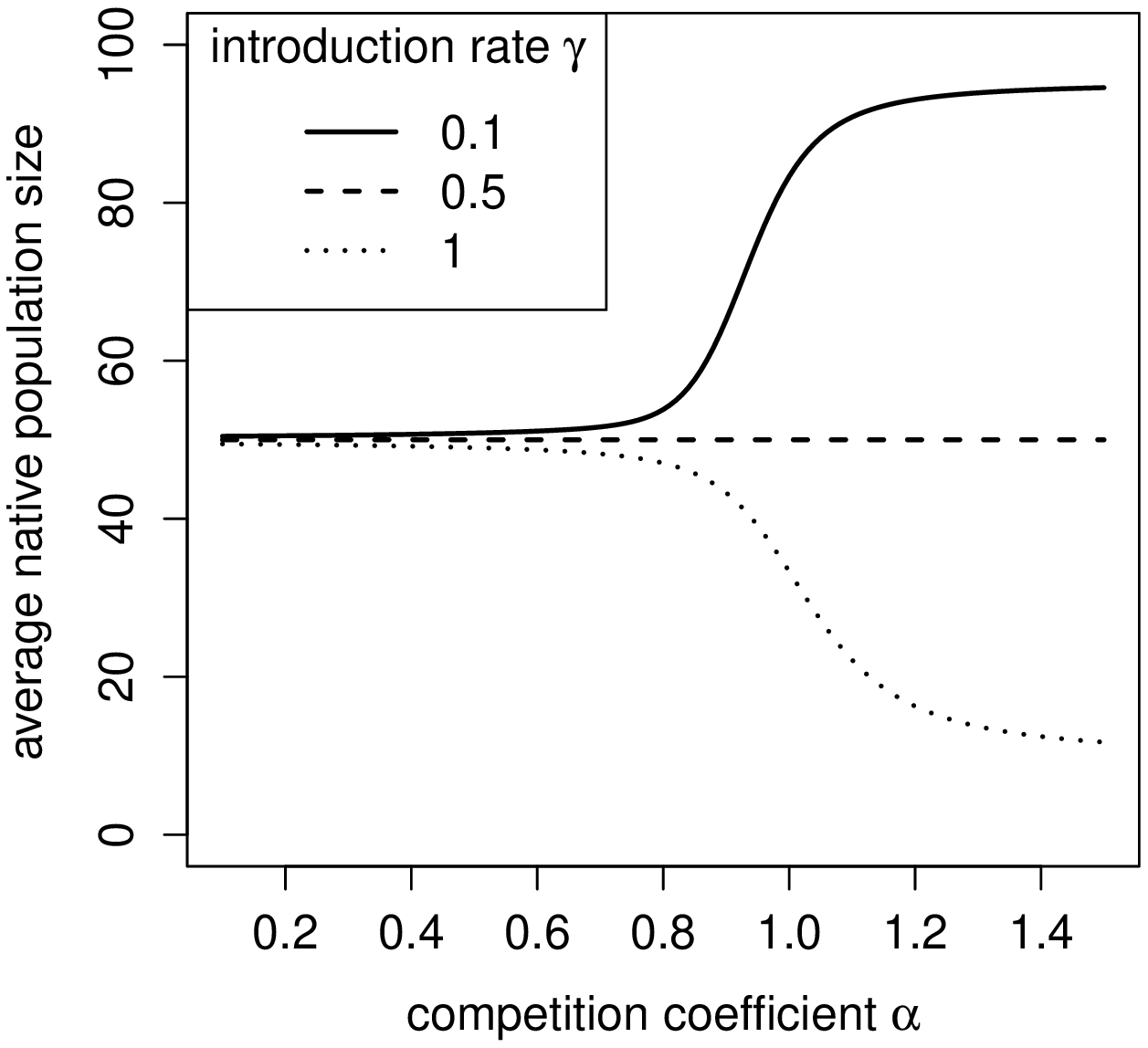}}
\caption{Long-term measures of impact in the model with immigration of native individuals: A) the proportion of time during which the native species is absent and B) the average population size of the native species under the stationary distribution. ($K$ = 100, $w=1$, $\gamma'=0.5$.)}
\label{fig:longterm}
\end{figure}
\end{appendices}

\pagebreak

\putbib
\end{bibunit}

\end{document}